\newtheorem{thm}{Theorem}
\newtheorem{lem}{Lemma}
\newtheorem{cor}{Corollary}
\newtheorem{prop}{Proposition}
\theoremstyle{definition}
\newtheorem{definition}{Definition}
\def \arxiv {1}
\title{Signal Enhancement as Minimization of Relevant Information Loss}
\author{\IEEEauthorblockN{Bernhard C. Geiger\IEEEauthorrefmark{1}, Gernot Kubin\IEEEauthorrefmark{1}
\IEEEauthorblockA{\IEEEauthorrefmark{1}Signal Processing and Speech Communication Laboratory, Graz University of Technology, Austria}
$\{$geiger,gernot.kubin$\}$@tugraz.at}}
\begin{document}
\newcounter{myTempCnt}

\ifthenelse{\arxiv=1}{
\newcommand{\x}[1]{x[#1]}
\newcommand{\y}[1]{y[#1]}

\newcommand{\pdfy}{f_Y(y)}

\newcommand{\ent}[1]{H(#1)}
\newcommand{\diffent}[1]{h(#1)}
\newcommand{\derate}[1]{\bar{h}\left(\mathbf{#1}\right)}
\newcommand{\mutinf}[1]{I(#1)}
\newcommand{\ginf}[1]{I_G(#1)}
\newcommand{\kld}[2]{D(#1||#2)}
\newcommand{\kldrate}[2]{\bar{D}(\mathbf{#1}||\mathbf{#2})}
\newcommand{\binent}[1]{H_2(#1)}
\newcommand{\binentneg}[1]{H_2^{-1}\left(#1\right)}
\newcommand{\entrate}[1]{\bar{H}(\mathbf{#1})}
\newcommand{\mutrate}[1]{\mutinf{\mathbf{#1}}}
\newcommand{\redrate}[1]{\bar{R}(\mathbf{#1})}
\newcommand{\pinrate}[1]{\vec{I}(\mathbf{#1})}
\newcommand{\loss}[2][\empty]{\ifthenelse{\equal{#1}{\empty}}{L(#2)}{L_{#1}(#2)}}
\newcommand{\lossrate}[2][\empty]{\ifthenelse{\equal{#1}{\empty}}{\overline{L}(\mathbf{#2})}{L_{\mathbf{#1}}(\mathbf{#2})}}
\newcommand{\gain}[1]{G(#1)}
\newcommand{\atten}[1]{A(#1)}
\newcommand{\relLoss}[2][\empty]{\ifthenelse{\equal{#1}{\empty}}{l(#2)}{l_{#1}(#2)}}
\newcommand{\relLossrate}[1]{l(\mathbf{#1})}
\newcommand{\relTrans}[1]{t(#1)}
\newcommand{\partEnt}[2]{H^{#1}(#2)}

\newcommand{\dom}[1]{\mathcal{#1}}
\newcommand{\indset}[1]{\mathbb{I}\left({#1}\right)}

\newcommand{\unif}[2]{\mathcal{U}\left(#1,#2\right)}
\newcommand{\chis}[1]{\chi^2\left(#1\right)}
\newcommand{\chir}[1]{\chi\left(#1\right)}
\newcommand{\normdist}[2]{\mathcal{N}\left(#1,#2\right)}
\newcommand{\Prob}[1]{\mathrm{Pr}(#1)}
\newcommand{\Mar}[1]{\mathrm{Mar}(#1)}
\newcommand{\Qfunc}[1]{Q\left(#1\right)}

\newcommand{\expec}[1]{\mathrm{E}\left\{#1\right\}}
\newcommand{\expecwrt}[2]{\mathrm{E}_{#1}\left\{#2\right\}}
\newcommand{\var}[1]{\mathrm{Var}\left\{#1\right\}}
\renewcommand{\det}{\mathrm{det}}
\newcommand{\cov}[1]{\mathrm{Cov}\left\{#1\right\}}
\newcommand{\sgn}[1]{\mathrm{sgn}\left(#1\right)}
\newcommand{\sinc}[1]{\mathrm{sinc}\left(#1\right)}
\newcommand{\e}[1]{\mathrm{e}^{#1}}
\newcommand{\multint}{\iint{\cdots}\int}
\newcommand{\modd}[3]{((#1))_{#2}^{#3}}
\newcommand{\quant}[1]{Q\left(#1\right)}
\newcommand{\card}[1]{\mathrm{card}(#1)}
\newcommand{\diam}[1]{\mathrm{diam}(#1)}
\newcommand{\rec}[1]{r(#1)}
\newcommand{\recmap}[1]{r_{\mathrm{MAP}}(#1)}

\newcommand{\ivec}{\mathbf{i}}
\newcommand{\hvec}{\mathbf{h}}
\newcommand{\gvec}{\mathbf{g}}
\newcommand{\avec}{\mathbf{a}}
\newcommand{\kvec}{\mathbf{k}}
\newcommand{\fvec}{\mathbf{f}}
\newcommand{\vvec}{\mathbf{v}}
\newcommand{\xvec}{\mathbf{x}}
\newcommand{\Xvec}{\mathbf{X}}
\newcommand{\Xhvec}{\hat{\mathbf{X}}}
\newcommand{\xhvec}{\hat{\mathbf{x}}}
\newcommand{\xtvec}{\tilde{\mathbf{x}}}
\newcommand{\Yvec}{\mathbf{Y}}
\newcommand{\yvec}{\mathbf{y}}
\newcommand{\Zvec}{\mathbf{Z}}
\newcommand{\Svec}{\mathbf{S}}
\newcommand{\Nvec}{\mathbf{N}}
\newcommand{\Pvec}{\mathbf{P}}
\newcommand{\muvec}{\boldsymbol{\mu}}
\newcommand{\wvec}{\mathbf{w}}
\newcommand{\Wvec}{\mathbf{W}}
\newcommand{\Hmat}{\mathbf{H}}
\newcommand{\Amat}{\mathbf{A}}
\newcommand{\Fmat}{\mathbf{F}}

\newcommand{\zerovec}{\mathbf{0}}
\newcommand{\eye}{\mathbf{I}}
\newcommand{\evec}{\mathbf{i}}

\newcommand{\zeroone}{\left[\begin{array}{c}\zerovec^T\\ \eye\end{array} \right]}
\newcommand{\zerooneT}{\left[\begin{array}{cc}\zerovec & \eye\end{array} \right]}
\newcommand{\zerooneM}{\left[\begin{array}{cc}\zerovec &\zerovec^T\\\zerovec& \eye\end{array} \right]}

\newcommand{\Cxx}{\mathbf{C}_{XX}}
\newcommand{\Cx}{\mathbf{C}_{\Xvec}}
\newcommand{\Chx}{\hat{\mathbf{C}}_{\Xvec}}
\newcommand{\Cy}{\mathbf{C}_{\Yvec}}
\newcommand{\Cz}{\mathbf{C}_{\Zvec}}
\newcommand{\Cn}{\mathbf{C}_{\mathbf{N}}}
\newcommand{\Cnt}{\underline{\mathbf{C}}_{\tilde{\mathbf{N}}}}
\newcommand{\Cntm}{\underline{\mathbf{C}}_{\tilde{\mathbf{N}}}}
\newcommand{\Cxh}{\mathbf{C}_{\hat{X}\hat{X}}}
\newcommand{\rxx}{\mathbf{r}_{XX}}
\newcommand{\Cxy}{\mathbf{C}_{XY}}
\newcommand{\Cyy}{\mathbf{C}_{YY}}
\newcommand{\Cnn}{\mathbf{C}_{NN}}
\newcommand{\Cyx}{\mathbf{C}_{YX}}
\newcommand{\Cygx}{\mathbf{C}_{Y|X}}
\newcommand{\Wmat}{\underline{\mathbf{W}}}

\newcommand{\Jac}[2]{\mathcal{J}_{#1}(#2)}

\newcommand{\NN}{{N{\times}N}}
\newcommand{\perr}{P_e}
\newcommand{\perh}{\hat{\perr}}
\newcommand{\pert}{\tilde{\perr}}

\newcommand{\vecind}[1]{#1_0^n}
\newcommand{\roots}[2]{{#1}_{#2}^{(i_{#2})}}
\newcommand{\rootx}[1]{x_{#1}^{(i)}}
\newcommand{\rootn}[2]{x_{#1}^{#2,(i)}}

\newcommand{\markkern}[1]{f_M(#1)}
\newcommand{\pole}{a_1}
\newcommand{\preim}[1]{g^{-1}[#1]}
\newcommand{\preimV}[1]{\mathbf{g}^{-1}[#1]}
\newcommand{\Xmax}{\bar{X}}
\newcommand{\Xmin}{\underbar{X}}
\newcommand{\xmax}{x_{\max}}
\newcommand{\xmin}{x_{\min}}
\newcommand{\limn}{\lim_{n\to\infty}}
\newcommand{\limk}{\lim_{k\to\infty}}
\newcommand{\limX}{\lim_{\hat{\Xvec}\to\Xvec}}
\newcommand{\limx}{\lim_{\hat{X}\to X}}
\newcommand{\limXo}{\lim_{\hat{X}_1\to X_1}}
\newcommand{\sumin}{\sum_{i=1}^n}
\newcommand{\finv}{f_\mathrm{inv}}
\newcommand{\ejtheta}{\e{\jmath\theta}}
\newcommand{\khat}{\bar{k}}
\newcommand{\modeq}[1]{g(#1)}
\newcommand{\partit}[1]{\mathcal{P}_{#1}}
\newcommand{\psd}[1]{S_{#1}(\e{\jmath \theta})}
\newcommand{\borel}[1]{\mathfrak{B}(#1)}
\newcommand{\infodim}[1]{d(#1)}

\newcommand{\delay}[2]{\psblock(#1){#2}{\footnotesize$z^{-1}$}}
\newcommand{\Quant}[2]{\psblock(#1){#2}{\footnotesize$\quant{\cdot}$}}
\newcommand{\moddev}[2]{\psblock(#1){#2}{\footnotesize$\modeq{\cdot}$}}}{}

\renewcommand{\Cn}{\mathbf{\underline{C}_N}}
\renewcommand{\Cx}{\mathbf{\underline{C}_X}}
\renewcommand{\Cz}{\mathbf{\underline{C}_S}}
\renewcommand{\Cy}{\mathbf{\underline{C}_Y}}

\maketitle

\begin{abstract}
We introduce the notion of relevant information loss for the purpose of casting the signal enhancement problem in information-theoretic terms. We show that many algorithms from machine learning can be reformulated using relevant information loss, which allows their application to the aforementioned problem. As a particular example we analyze principle component analysis for dimensionality reduction, discuss its optimality, and show that the relevant information loss can indeed vanish if the relevant information is concentrated on a lower-dimensional subspace of the input space.
\end{abstract}

\ifthenelse{\arxiv=1}{
\begin{figure*}[t]
 \centering  
 \subfigure[Computing the (absolute) information loss $\loss{X\to Y}$]{\label{fig:sysmod_abs}
 \begin{pspicture}[showgrid=false](1,1)(8.5,3.5)
      \psset{style=RoundCorners}
	\pssignal(4.5,2){xorig}{\textcolor{red}{$X$}}
	\psfblock[linecolor=red,framesize=1.5 1](6.5,2){c}{\textcolor{red}{$g(\cdot)$}}
	\pssignal(8.5,2){y}{\textcolor{red}{$Y$}}
	\nclist[style=Arrow,linecolor=red]{ncline}[naput]{xorig,c,y}
 	\pssignal(1,2){x}{$\hat{X}$}
	\pssignal(3,1){n}{$\partit{}$}
	\psfblock[framesize=1 0.75](3,2){oplus}{$Q(\cdot)$}
	\ncline[style=Arrow]{n}{oplus}
	\nclist[style=Arrow]{ncline}[naput]{xorig,oplus,x}

	\psline[style=Dash](1.1,2.75)(4.5,2.75)
	\psline[style=Dash](1.1,2.25)(1.1,2.75)
	\psline[style=Dash](4.5,2.25)(4.5,2.75)
	\psline[style=Dash](0.9,3.25)(8.5,3.25)
	\psline[style=Dash](0.9,2.25)(0.9,3.25)
	\psline[style=Dash](8.5,2.25)(8.5,3.25)
 	\rput*(2.8,2.75){\scriptsize{$\mutinf{\hat{X};X}$}}
	\rput*(4.7,3.25){\scriptsize{$\mutinf{\hat{X};Y}$}}
\end{pspicture}}
\quad
 \subfigure[Computing the relevant information loss $L_S({X\to Y})$]{\label{fig:sysmod_rel}
 \begin{pspicture}[showgrid=false](0,1)(7,5.5)
    \psset{style=RoundCorners}
    \pssignal(0.5,2){xorig}{\textcolor{red}{$X$}}
    \psfblock[linecolor=red,framesize=1.5 1](3.5,2){c}{\textcolor{red}{$g(\cdot)$}}
    \pssignal(6.5,2){y}{\textcolor{red}{$Y$}}
	\nclist[style=Arrow,linecolor=red]{ncline}[naput]{xorig,c,y}
	
	\pssignal(0.5,4.5){s}{{$S$}}
	\pssignal(2,5.5){n}{$\partit{}$}
	\psfblock[framesize=1 0.75](2,4.5){oplus}{$Q(\cdot)$}
	\ncline[style=Arrow]{n}{oplus}
	\pssignal(3.5,4.5){shat}{{$\hat{S}$}}
	\nclist[style=Arrow]{ncline}[naput]{s,oplus,shat}
\ncline[style=Arrow,style=Dash]{s}{xorig}
	\psline[style=Dash](0.5,2.25)(3.5,4.25)
	\psline[style=Dash](6.5,2.25)(3.5,4.25)
	\rput*(2,3.25){\scriptsize{$\mutinf{\hat{S};X}$}}
	\rput*(5,3.25){\scriptsize{$\mutinf{\hat{S};Y}$}}
\end{pspicture}}
\caption{Model for computing the information loss of a memoryless input-output system $g$. $Q$ is a quantizer with partition $\partit{}$.}
\label{fig:sysmod}
\end{figure*}
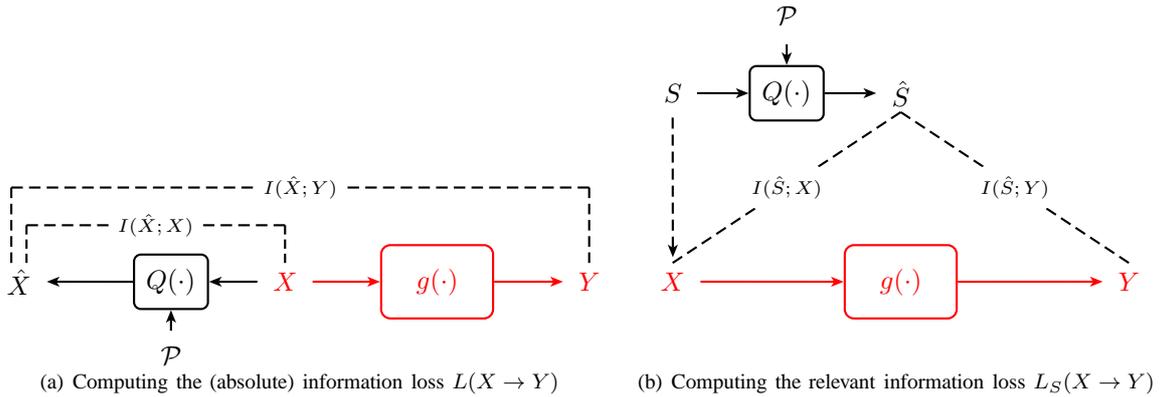}{}

\section{Introduction}\label{sec:intro}
It is a widely known fact from information theory, that processing a random variable, or a signal, cannot increase the amount of information it represents. In fact, as the data processing inequality states, information can only be lost by passing a signal through a deterministic system. This information loss can -- very loosely speaking -- be interpreted as the difference between the information at the input and the information at the output of the system. In case of continuous-valued random variables (which contain an infinite amount of information) the authors developed a theory quantifying the information loss both in absolute and relative terms~\cite{Geiger_ILStatic_IZS,Geiger_RILPCA_arXiv}.

We now look at the data processing inequality (DPI) from a different point of view by asking the following question: How can we justify \emph{signal processing}, knowing that it can only \emph{reduce} information? Whenever technical systems prepare a physical, information-carrying signal for human perception, processing occurs and, presumably, a significant amount of information is lost. The only justification for this approach is that the information which is lost is actually not the information we are interested in, but rather some nuisance. We hope to preserve all the information \emph{relevant} to us while minimizing the nuisance. The problem of signal enhancement, as we will argue later, is an optimization problem which can be cast in exactly these terms.\ifthenelse{\arxiv=1}{ Aside from that, each block in the signal processing chain has the goal of representing the relevant information such that as little as possible is lost in the subsequent blocks; as in the sense of~\cite{Johnson_ITNeural}, preprocessing can improve performance, while postprocessing cannot -- we cannot recover lost information, but we can prevent loosing it.}{} It is our goal to make these statements precise by supporting them with a mathematical framework of \emph{relevant information loss}.

Relevant information and its counterpart, relevant information loss, are concepts not altogether new. Indeed, the notion of relevance has a long history in machine learning and neural networks: Plumbley, essentially using the same definition as we do, explicitly used relevant information loss in analyzing the properties of principle component analysis~\cite{Plumbley_TN,Deco_ITNN}. The information bottleneck (IB) method and its variants directly maximize relevant information while minimizing the complexity of its signal representation~\cite{Tishby_InformationBottleneck}. Relevance w.r.t. a specific goal was the motivation for defining an information processing measure for neural networks in~\cite{Sanchez_ProcessingMeasure,Sanchez_PhD}. Finally, the principle of relevant information, being structurally similar to the information bottleneck formulation, builds on minimizing the relative entropy between the relevant random variable and its representation~\cite{Principe_ITLEarning}.

Conversely, in signal processing the notion of relevant information has not gained foothold yet. In this paper we thus formulate a definition of relevant information loss (Section~\ref{sec:defLoss}) following the ideas of~\cite{Plumbley_TN} and discuss its properties from the view-point of system theory (Section~\ref{sec:basicProp}). On the basis of these properties we then discuss the problem of signal enhancement in Section~\ref{sec:signal_enhancement} and investigate the connections to the machine learning literature. As a special example of signal enhancement we analyze principle component analysis (Section~\ref{sec:PCA}), generalizing the results of~\cite{Plumbley_TN}. Finally, we give a brief example of the application of our results to a simple digital communication system (Section~\ref{sec:examples}).

\ifthenelse{\arxiv=1}{}{An extended version of this paper, including all proofs omitted here for the sake of brevity, is available in~\cite{Geiger_Relevant_arXiv}.}

\section{A Definition of Relevant Information Loss}\label{sec:defLoss}
We start with recalling the definition given in~\cite{Geiger_ILStatic_IZS}, where the information loss induced by transforming a (possibly multidimensional) random variable (RV) $X$ to another RV $Y$ by a static function $g{:}\ \dom{X}\to\dom{Y}$, $\dom{X},\dom{Y}\subseteq\mathbb{R}^N$ was given as
\begin{equation}
 \loss{X\to Y} = \sup_{\partit{}} \left(\mutinf{\hat{X};X}-\mutinf{\hat{X};Y}\right) = \ent{X|Y}\label{eq:loss}
\end{equation}
where the supremum is over all partitions $\partit{}$ of the sample space $\dom{X}$ of $X$, and where $\hat{X}$ is obtained by quantizing $X$ accordingly\ifthenelse{\arxiv=1}{ (see Fig.~\ref{fig:sysmod_abs})}{}. This definition of absolute information loss is accompanied by an expression for the relative information loss in~\cite{Geiger_RILPCA_arXiv}. 

As the examples in~\cite{Geiger_RILPCA_arXiv} show, both the absolute and the relative definition of information loss have their shortcomings, especially when it comes to systems $g$ used for signal enhancement: Since the expressions only consider the RV $X$ at the input of the system, they do not take into account that not all of the information contained in $X$ is \emph{relevant}, often leading to counter-intuitive results. The main contribution of this work lies thus in analyzing the implications of the following

\begin{definition}[Relevant Information Loss]\label{def:relevantLoss}
 Let $X$ be an RV on the sample space $\dom{X}$, and let $Y$ be obtained by transforming $X$ with a static function $g$. Let $S$ be another RV on the sample space $\dom{S}$ representing \emph{relevant information}. The information loss \emph{relevant w.r.t. $S$} is defined as
\begin{equation}
 \loss[S]{X\to Y} = \sup_{\partit{}} \left(\mutinf{\hat{S};X}-\mutinf{\hat{S};Y}\right)=\mutinf{X;S|Y}.\label{eq:relevantloss}
\end{equation}
where the supremum is over all partitions of the sample space $\dom{S}$, and where $\hat{S}$ is obtained by quantizing $S$ accordingly.
\end{definition}

The information loss relevant w.r.t. an RV $S$ is thus the difference of mutual informations between $S$ and the input and output of the system\ifthenelse{\arxiv=1}{ (see Fig.~\ref{fig:sysmod_rel})}{}. Equivalently, the relevant information loss is exactly the information that $X$ contains about $S$ \emph{which is not contained in $Y$}. Due to the data processing inequality~(e.g.,~\cite{Cover_Information2}) it is a non-negative quantity, i.e., a deterministic system cannot increase the amount of available relevant information. 

As we have already pointed out in the introduction, this definition of relevant information loss is not altogether new: Plumbley already introduced this quantity (named $\Delta I_S(X;Y)$, and omitting the supremum assuming finite mutual information between $S$ and $X$) in the context of unsupervised learning in neural networks~\cite{Plumbley_TN}. The rationale for this new measure was to circumvent some shortcomings of Linsker's principle of information maximization~\cite{Linsker_Infomax}: While infomax works well for Gaussian RVs and linear systems, applying the same algorithms to non-Gaussian data just maximizes an upper bound on the information. Plumbley's information loss, conversely, also yields closed form solutions for Gaussian data, but in addition to that one can derive upper bounds on the relevant information loss whenever the data is non-Gaussian. Thus, minimizing an upper bound on the information loss can be assumed to be more promising than maximizing an upper bound on the transferred information~\cite{Plumbley_TN}.

\section{Properties of Relevant Information Loss}\label{sec:basicProp}
We now analyze the elementary properties of relevant information loss:

\begin{prop}[Elementary Properties]\label{prop:elementary}
 The relevant information loss from Definition~\ref{def:relevantLoss} satisfies the following properties:
\begin{enumerate}
 \item $\loss[S]{X\to Y}\leq\mutinf{S;X}\leq\ent{S}$
 \item $\loss[S]{X\to Y}\geq\loss[Y]{X\to Y}=0$
 \item $\loss[S]{X\to Y}\leq\loss[X]{X\to Y}=\loss{X\to Y} $, with equality if $X$ is a function of $S$.
 \item $\loss[S]{X\to Y}=\ent{S|Y}$ if $S$ is a function of $X$.
\end{enumerate}
\end{prop}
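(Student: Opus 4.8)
The plan is to lean entirely on the closed form $\loss[S]{X\to Y}=\mutinf{X;S|Y}$ supplied by Definition~\ref{def:relevantLoss}, and to reduce each of the four claims to standard identities for (conditional) mutual information; the only structural input is that $Y=g(X)$ is a deterministic function of $X$. The single fact I would isolate first is the decomposition
\begin{equation}
 \mutinf{X;S}=\mutinf{Y;S}+\mutinf{X;S|Y},\label{eq:keyid}
\end{equation}
obtained by applying the chain rule to $\mutinf{(X,Y);S}$ and noting that $X\mapsto(X,g(X))$ is injective, so that $\mutinf{(X,Y);S}=\mutinf{X;S}$. This identity does most of the work.

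From \eqref{eq:keyid} property~1 is immediate: since $\mutinf{Y;S}\geq0$ we get $\loss[S]{X\to Y}=\mutinf{X;S}-\mutinf{Y;S}\leq\mutinf{X;S}$, and $\mutinf{X;S}\leq\ent{S}$ is the elementary bound of mutual information by entropy. Property~2 I would split into two halves: $\loss[Y]{X\to Y}=\mutinf{X;Y|Y}=0$ because conditioning a variable on itself annihilates the mutual information, while $\loss[S]{X\to Y}=\mutinf{X;S|Y}\geq0$ is merely the non-negativity of conditional mutual information (equivalently the data processing inequality already invoked after the definition).

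For property~3 the equality $\loss[X]{X\to Y}=\loss{X\to Y}$ follows by inspection, since substituting $S=X$ into \eqref{eq:relevantloss} reproduces verbatim the absolute-loss definition \eqref{eq:loss}, both equalling $\ent{X|Y}$. For the inequality I would expand $\mutinf{X;(X,S)|Y}$ two ways by the chain rule: peeling off $X$ first gives $\mutinf{X;X|Y}+\mutinf{X;S|X,Y}=\mutinf{X;X|Y}$, the last term vanishing because a variable carries no information given itself; peeling off $S$ first gives $\mutinf{X;S|Y}+\mutinf{X;X|S,Y}\geq\mutinf{X;S|Y}$. Comparing the two expansions yields $\mutinf{X;S|Y}\leq\mutinf{X;X|Y}$, with residual gap $\mutinf{X;X|S,Y}$, which vanishes exactly when $X$ is determined by $S$ — giving the stated equality condition. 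Property~4 then drops out of the entropy decomposition $\mutinf{X;S|Y}=\ent{S|Y}-\ent{S|X,Y}$ together with $\ent{S|X,Y}=0$ whenever $S$ is a function of $X$.

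The hard part will not be the algebra but its rigorous justification for the continuous, multidimensional RVs the paper targets, where differential-entropy decompositions may carry infinite or negative terms. I would therefore phrase the arguments for properties 1--3 purely in terms of conditional mutual information, which is well defined and non-negative measure-theoretically irrespective of the existence of densities, and resort to entropy-based identities only in the equality case of property~3 and in property~4, where the conditioning variable genuinely determines the argument and the offending conditional entropy is honestly zero rather than merely formal.
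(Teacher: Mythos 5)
Your proposal is correct and follows essentially the same route as the paper: both reduce all four claims to standard (conditional) mutual information identities exploiting $Y=g(X)$, and your two-way chain-rule expansion of $\mutinf{X;(X,S)|Y}$ for property~3 is just a repackaging of the paper's decomposition $\mutinf{S;X|Y}=\ent{X|Y}-\ent{X|Y,S}$, with the same vanishing-term argument for the equality case and for property~4. Your extra care in phrasing properties~1--3 via conditional mutual informations (rather than possibly infinite conditional entropies of continuous RVs) is a sound refinement of rigor, not a different proof.
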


\begin{proof}
 The first property results immediately from the definition, while the second property is due to the fact that $Y$ is a function of $X$. The third property results from making $X$ the relevant RV, thus making~\eqref{eq:relevantloss} equal to~\eqref{eq:loss}. Note that, with~\eqref{eq:loss},
\begin{IEEEeqnarray}{RCL}
 \loss[S]{X\to Y}=\mutinf{S;X|Y}  &=& \ent{X|Y}-\ent{X|Y,S}\\
&\leq& \loss{X\to Y}
\end{IEEEeqnarray}
with equality if $X$ is a function of $S$. The last property, for $S=f(X)$, follows by expanding $\mutinf{X;S|Y}$ as $\ent{S|Y}-\ent{S|Y,X} =\ent{S|Y}$.
\end{proof}

The third property is of particular interest. Essentially, it states that the relevant information loss cannot exceed the total information loss, so upper bounds (e.g., those presented in~\cite{Geiger_ILStatic_IZS}) for the latter can be used for the former as well.\ifthenelse{\arxiv=1}{ In addition to that, it suggests an alternative way to define $\loss{X\to Y}$: Namely as
\begin{equation}
 \loss{X\to Y} = \sup_{S} \left(\mutinf{S;X}-\mutinf{S;Y}\right)
\end{equation}
where the supremum is over all RVs on the sample space $\dom{X}$.}{}

Since by Definition~\ref{def:relevantLoss} relevant information loss is represented by a conditional mutual information, it inherits all of its properties. In particular, as we show next, a data processing inequality (DPI) holds:

\begin{prop}[Data Processing Inequality]\label{prop:dpi}
 Let $V-W-X-Y$ be a Markov chain. Then,
\begin{equation}
 \loss[W]{X\to Y}\geq \loss[V]{X\to Y}.
\end{equation}
\end{prop}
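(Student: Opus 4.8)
The plan is to work entirely with the closed form $\loss[S]{X\to Y}=\mutinf{X;S|Y}$ supplied by Definition~\ref{def:relevantLoss}, so that the claim becomes the conditional data processing inequality $\mutinf{X;W|Y}\geq\mutinf{X;V|Y}$. My main tool would be the chain rule for mutual information, expanding the joint quantity $\mutinf{X;V,W|Y}$ in the two possible orders:
\begin{IEEEeqnarray}{rCl}
\mutinf{X;V,W|Y} &=& \mutinf{X;W|Y}+\mutinf{X;V|W,Y}\\
&=& \mutinf{X;V|Y}+\mutinf{X;W|V,Y}.
\end{IEEEeqnarray}
Subtracting the two expansions yields
\[
\mutinf{X;W|Y}-\mutinf{X;V|Y}=\mutinf{X;W|V,Y}-\mutinf{X;V|W,Y},
\]
so the whole proof reduces to showing that the last term vanishes, i.e. that $\mutinf{X;V|W,Y}=0$.

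The hard part — and the step I expect to carry all the weight — is establishing the conditional independence $V\perp X\,|\,(W,Y)$. Here I would combine two facts. First, the hypothesis that $V-W-X$ is a Markov chain gives $V\perp X\,|\,W$. Second, $Y=g(X)$ is a deterministic function of $X$, so appending $Y$ to $X$ cannot destroy any Markov property: one checks that $V\perp X\,|\,W$ together with $Y$ being a function of $X$ upgrades to $V\perp(X,Y)\,|\,W$. A short manipulation of the conditional densities (marginalizing out $X$ to see that $V\perp Y\,|\,W$, then dividing) then downgrades this to the desired $V\perp X\,|\,(W,Y)$, which is exactly $\mutinf{X;V|W,Y}=0$.

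With that identity in place the argument closes immediately: the difference collapses to $\mutinf{X;W|Y}-\mutinf{X;V|Y}=\mutinf{X;W|V,Y}$, which is non-negative because it is a conditional mutual information. Hence $\mutinf{X;W|Y}\geq\mutinf{X;V|Y}$, and, re-translating through Definition~\ref{def:relevantLoss}, $\loss[W]{X\to Y}\geq\loss[V]{X\to Y}$.

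An alternative and more economical packaging of the same reasoning is to cite the conditional data processing inequality outright: conditioned on $Y$, the chain $V-W-X$ (which remains valid because $Y$ is a function of $X$) directly gives $\mutinf{X;V|Y}\leq\mutinf{X;W|Y}$. I would still have to verify the conditional Markov relation $V\perp X\,|\,(W,Y)$ to justify invoking it, so the essential obstacle is unchanged; the chain-rule computation above is simply the explicit, self-contained version of that step.
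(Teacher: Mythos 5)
Your proposal is correct and follows essentially the same route as the paper's own proof: both expand $\mutinf{X;W,V|Y}$ via the chain rule in two orders, kill the term $\mutinf{X;V|W,Y}$ using the Markov structure, and conclude from non-negativity of $\mutinf{X;W|V,Y}$. If anything, you are more careful than the paper on the one delicate step — the paper justifies $\mutinf{X;V|W,Y}=0$ by citing only $V\perp X\,|\,W$, whereas you explicitly upgrade this to $V\perp X\,|\,(W,Y)$, which is the statement actually needed.
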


\begin{proof}
\ifthenelse{\arxiv=1}{See Appendix.}{See~\cite{Geiger_Relevant_arXiv}.}
\end{proof}

The particular usefulness of the DPI (especially in the context of our work) relies on the fact that both $S-X-g(X)$ and $f(S)-S-X$ are Markov chains. Comparing this to Proposition~\ref{prop:dpi} one is tempted to believe that the direction of the inequality depends on the fact whether $f(S)-S-X-Y$ or $S-f(S)-X-Y$ is a Markov chain. The following Corollary resolves this complication.

\begin{cor}\label{cor:dpi}
 Let $f$ be a measurable function defined on the sample space of $S$. Then,
\begin{equation}
 \loss[S]{X\to Y}\geq \loss[f(S)]{X\to Y}
\end{equation}
with equality if $S-f(S)-X-Y$ is a Markov chain.
\end{cor}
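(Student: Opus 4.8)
The plan is to derive both the inequality and the equality condition directly from the data processing inequality of Proposition~\ref{prop:dpi}, exploiting the two candidate Markov chains highlighted in the discussion preceding the statement.

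First I would establish the inequality. Since $f$ is a deterministic (measurable) function, $f(S)$ is completely determined by $S$, so $f(S)-S-X$ is a Markov chain; combined with the fact that $Y=g(X)$ is a function of $X$ (which makes $S-X-Y$ a Markov chain), the joint distribution factors so that $f(S)-S-X-Y$ is a Markov chain. Applying Proposition~\ref{prop:dpi} with the identification $V=f(S)$ and $W=S$ then yields immediately
\begin{equation}
 \loss[S]{X\to Y}\geq\loss[f(S)]{X\to Y}.
\end{equation}
Crucially, this holds unconditionally, since $f(S)-S-X-Y$ is a Markov chain for \emph{every} measurable $f$.

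For the equality condition I would invoke the second Markov chain. Suppose now that $S-f(S)-X-Y$ is \emph{also} a Markov chain. Then Proposition~\ref{prop:dpi} applies a second time, this time with $V=S$ and $W=f(S)$, giving the reverse inequality $\loss[f(S)]{X\to Y}\geq\loss[S]{X\to Y}$. Combining this with the inequality already obtained forces equality. This is precisely the resolution promised in the text: the direction of the inequality is pinned down by the always-valid chain $f(S)-S-X-Y$, while the extra chain $S-f(S)-X-Y$ supplies the reverse bound, and hence equality, exactly when it happens to hold.

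The main obstacle, though a minor one, is the careful verification that $f(S)-S-X-Y$ is genuinely a four-variable Markov chain rather than merely a concatenation of three-variable ones; this reduces to checking that the joint law factors as $p(S)\,p(f(S)\mid S)\,p(X\mid S)\,p(Y\mid X)$, which holds because $f(S)$ is a deterministic image of $S$ (so $X$ is conditionally independent of $f(S)$ given $S$) and $Y$ is a deterministic image of $X$ (so $Y$ is conditionally independent of $(S,f(S))$ given $X$). As an alternative that makes the equality condition even more transparent, I could instead write $\loss[S]{X\to Y}=\mutinf{X;S|Y}$ and apply the chain rule for conditional mutual information: since $\mutinf{X;f(S)|S,Y}=0$ (as $f(S)$ is a function of $S$), one gets $\mutinf{X;S|Y}=\mutinf{X;f(S)|Y}+\mutinf{X;S|f(S),Y}$, so the gap between the two losses equals $\mutinf{X;S|f(S),Y}\geq0$, which vanishes precisely when $X$ and $S$ are conditionally independent given $(f(S),Y)$ — a condition implied by the Markov chain $S-f(S)-X-Y$.
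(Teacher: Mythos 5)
Your proposal is correct and takes essentially the same route as the paper: both the inequality and the equality condition rest on Proposition~\ref{prop:dpi}, and the paper obtains equality via the identity $\loss[f(S)]{X\to Y}=\loss[S]{X\to Y}+\mutinf{X;f(S)|S,Y}$ with the last term vanishing because $f(S)$ is a function of $S$ --- the very same computation as your closing chain-rule remark, just packaged as an identity rather than as two opposing applications of the DPI. Your explicit check that $f(S)-S-X-Y$ is \emph{always} a Markov chain (so the inequality holds unconditionally) is a point the paper leaves implicit, but it is not a different method.
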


\begin{proof}
\ifthenelse{\arxiv=1}{See Appendix.}{Using Definition~\ref{def:relevantLoss}, the Corollary follows immediately from~\cite[Thm.~3.7.1]{Pinsker_InfoEngl}.}
\end{proof}
Before proceeding, we note that the third property of Proposition~\ref{prop:elementary} is an immediate consequence of this corollary.

We next show that, inherited from the properties of mutual information, also the relevant information loss obeys a chain rule:
\begin{prop}[Chain Rule of Information Loss]\label{prop:chain_rule}
 The information loss $\loss[S_1^n]{X\to Y}$ w.r.t. a collection $S_1^n=\{S_1,\dots,S_n\}$ of relevant RVs induced by a function $g$ satisfies
\begin{equation}
 \loss[S_1^n]{X\to Y}= \sum_{i=1}^n \loss[S_i|S_1^{i-1}]{X\to Y}.
\end{equation}
\end{prop}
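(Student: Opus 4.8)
The plan is to reduce the statement to the ordinary chain rule for mutual information by using the conditional-mutual-information form of relevant information loss from Definition~\ref{def:relevantLoss}. Recall that $\loss[S]{X\to Y}=\mutinf{X;S|Y}$; correspondingly I read the conditional relevant information loss $\loss[S_i|S_1^{i-1}]{X\to Y}$ as the conditional mutual information $\mutinf{X;S_i|S_1^{i-1},Y}$, i.e.\ relevant information loss with respect to $S_i$ when $S_1^{i-1}$ is additionally available as side information. Under this identification the claim becomes exactly
\begin{equation}
 \mutinf{X;S_1^n|Y}=\sum_{i=1}^n\mutinf{X;S_i|S_1^{i-1},Y}.
\end{equation}

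First I would substitute these two defining equalities into the target equation, so that both sides are conditional mutual informations carrying $Y$ as a fixed conditioning variable throughout. Then I would invoke the chain rule of mutual information, which by induction follows from the two-term identity $\mutinf{X;A,B|Y}=\mutinf{X;A|Y}+\mutinf{X;B|A,Y}$; applying it repeatedly to peel $S_1,S_2,\dots$ off of the collection $S_1^n$ produces the sum on the right. This is the entire computation: it is purely an appeal to a standard property of (conditional) mutual information rather than to anything specific to our setting, which is precisely why the proposition holds ``by inheritance'' as the surrounding text suggests.

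The only point needing a remark is well-definedness when the $S_i$ are continuous, since individual mutual informations may then be infinite. Here there is no real obstacle: every summand on the right is a conditional mutual information and hence non-negative, and the left-hand side is non-negative by the data processing inequality, so the chain-rule identity holds as an equality in $[0,\infty]$ with no $\infty-\infty$ ambiguity. If one instead prefers to argue from the supremum-over-partitions definition, one replaces each $S_i$ by its quantization $\hat{S}_i$, applies the chain rule to the resulting finite quantities, and passes to the supremum over the product partition; monotonicity of conditional mutual information under refinement then shows the two sides share the same supremum. I expect the main ``difficulty'' to be merely bookkeeping in the inductive peeling step, with the non-negativity argument above disposing of the infinite case cleanly.
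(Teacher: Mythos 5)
Your proof is correct and matches the paper's intent exactly: the paper omits the proof, stating only that it ``follows immediately from the chain rule of (conditional) information,'' which is precisely the reduction you carry out by reading $\loss[S_i|S_1^{i-1}]{X\to Y}$ as $\mutinf{X;S_i|S_1^{i-1},Y}$ and peeling off terms via the two-term chain rule identity. Your additional remarks on well-definedness in $[0,\infty]$ and on the supremum-over-partitions formulation go beyond what the paper records, but they do not change the approach.
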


The proof follows immediately from the chain rule of (conditional) information and is thus omitted. However, this chain rule justifies our intuitive understanding of the nature of information loss. We thus emphasize

\begin{cor}\label{cor:splitting}
  The information loss $\loss{X\to Y}$ induced by a function $g$ can be split into \emph{relevant} (w.r.t. $S$) and \emph{irrelevant} information loss:
\begin{equation}
 \loss{X\to Y} = \loss[S]{X\to Y}+\loss[X|S]{X\to Y}
\end{equation}
\end{cor}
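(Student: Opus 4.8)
The plan is to read this corollary directly off the chain rule of Proposition~\ref{prop:chain_rule}, applied to the ordered two-element collection $S_1^2=\{S_1,S_2\}$ with $S_1=S$ and $S_2=X$. Instantiating the chain rule for $n=2$ gives at once
\[
L_{\{S,X\}}(X\to Y) = L_S(X\to Y) + L_{X|S}(X\to Y),
\]
so the entire substance of the corollary is the identification of the left-hand side with the total loss $L(X\to Y)$.

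For that identification I would invoke the third property of Proposition~\ref{prop:elementary}, now reading the pair $(S,X)$ in the role of the relevant RV. That property states $L_{(S,X)}(X\to Y)\le L_X(X\to Y)=L(X\to Y)$, with equality whenever $X$ is a function of the relevant RV. Since $X$ is trivially a (projection) function of $(S,X)$, the equality condition is met and $L_{\{S,X\}}(X\to Y)=L(X\to Y)$. Substituting this into the chain-rule expansion yields the claimed split of $L(X\to Y)$ into its relevant and irrelevant parts. (Equivalently one could obtain $L_{\{S,X\}}=L$ from Corollary~\ref{cor:dpi} by projecting $(S,X)$ onto its $X$-component, but the direct appeal to Proposition~\ref{prop:elementary} is cleaner.)

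As a consistency check I would verify the decomposition at the level of conditional mutual informations. With $L(X\to Y)=\ent{X|Y}$ from~\eqref{eq:loss}, one has $L_S(X\to Y)=\mutinf{X;S|Y}=\ent{X|Y}-\ent{X|S,Y}$, and, reading the conditional subscript consistently with the terms $L_{S_i|S_1^{i-1}}$ in the chain rule, $L_{X|S}(X\to Y)=\mutinf{X;X|S,Y}=\ent{X|S,Y}$; the two summands then combine to $\ent{X|Y}=L(X\to Y)$.

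I do not anticipate a genuine obstacle: once the chain rule is granted, the proof is bookkeeping. The single point demanding care is the interpretation of the conditional notation $L_{X|S}$, which must be understood exactly as the conditional terms in Proposition~\ref{prop:chain_rule}, namely $\mutinf{X;X|S,Y}=\ent{X|S,Y}$. With that convention fixed and the observation that enlarging the relevant RV to $(S,X)$ recovers the full loss because $X$ determines itself, no nontrivial computation remains.
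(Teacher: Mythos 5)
Your proof is correct and takes essentially the same route as the paper's: both apply the chain rule of Proposition~\ref{prop:chain_rule} to the ordered pair $(S,X)$ and then identify $\loss[XS]{X\to Y}$ with the total loss $\loss{X\to Y}$. The only cosmetic difference lies in justifying that identification --- the paper cites the conditional independence of $S$ and $Y$ given $X$, while you invoke the equality case of property 3 of Proposition~\ref{prop:elementary} (since $X$ is trivially a function of $(S,X)$); both boil down to $\ent{X|X,S,Y}=0$, so the arguments coincide.
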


\begin{proof}
 The proof follows from Proposition~\ref{prop:chain_rule} and from the fact that since $S$ and $Y$ are conditionally independent given $X$ we have $\loss[XS]{X\to Y} = \loss{X\to Y}$.
\end{proof}

\ifthenelse{\arxiv=1}{Note that even in a simple scenario with additive noise, i.e., $X=S+N$, it is not straightforward to just identify the noise $N$ with the irrelevant information. We will show this in one of the examples in Section~\ref{sec:examples}.
}{}

\begin{figure}[t]
\centering
  \begin{pspicture}[showgrid=false](0.5,1.5)(8,4)
	\psset{style=RoundCorners}
 	\pssignal(0.5,2){x}{$X$}\pssignal(0.5,3.5){s}{$S$}\psline[style=Dash,style=Arrow](0.5,3.25)(0.5,2.25)
	\psfblock[framesize=1.5 1](3,2){d}{\parbox{1.5\psunit}{\centering $\mathsf{System\ 1}$\\$g$}}
	\psfblock[framesize=1.5 1](6,2){c}{\parbox{1.5\psunit}{\centering $\mathsf{System\ 2}$\\$h$}}
	\pssignal(8,2){y}{$Z$}
  \nclist[style=Arrow]{ncline}[naput]{x,d,c $Y$,y}
	\psline[style=Dash](2,3)(4,3)
	\psline[style=Dash](2,2.5)(2,3)
	\psline[style=Dash](4,2.5)(4,3)
	\psline[style=Dash](5,3)(7,3)
	\psline[style=Dash](5,2.5)(5,3)
	\psline[style=Dash](7,2.5)(7,3)
	\psline[style=Dash](1.5,3.5)(7.5,3.5)
	\psline[style=Dash](1.5,2.5)(1.5,3.5)
	\psline[style=Dash](7.5,2.5)(7.5,3.5)
 	\rput*(3,3){\scriptsize\textcolor{black}{$\loss[S]{X\to Y}$}}
	\rput*(6,3){\scriptsize\textcolor{black}{$\loss[S]{Y\to Z}$}}
	\rput*(4.5,3.5){\scriptsize\textcolor{black}{$\loss[S]{X\to Z}$}}
\end{pspicture}
\caption{Cascade of two systems: The relevant information loss of the cascade is the sum of relevant information losses of the constituent systems.}
\label{fig:cascade}
\end{figure}
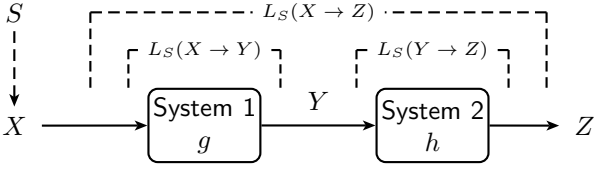

A final property we want to present concerns the cascade of systems (see Fig.~\ref{fig:cascade}). Assume, for the moment, that the output $Y$ of the first system $g$ is used as the input to a second system $h$, which responds with the RV $Z$. In~\cite{Geiger_ISIT2011arXiv} it was shown that the information loss of this cascade is additive, i.e., 
\begin{equation}
 \loss{X\to Z} = \loss{X\to Y} +\loss{Y\to Z}.
\end{equation}
The same holds for the relevant information loss, as Plumbley showed in~\cite{Plumbley_TN}. We thus state
\begin{prop}[Information Loss in a Cascade is Additive,~\cite{Plumbley_TN}]\label{prop:cascade}
 Let $Y$ be the RV obtained by transforming $X$ with a static function $g$, and let $Z$ be obtained by transforming $Y$ with $h$. Then, the information loss relevant w.r.t. $S$ is given as
\begin{equation}
 \loss[S]{X\to Z} = \loss[S]{X\to Y} +\loss[S]{Y\to Z}.
\end{equation}
\end{prop}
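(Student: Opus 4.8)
The plan is to work directly with the closed-form expressions furnished by Definition~\ref{def:relevantLoss}, which identifies each relevant information loss with a conditional mutual information. Writing $\loss[S]{X\to Z}=\mutinf{X;S|Z}$, $\loss[S]{X\to Y}=\mutinf{X;S|Y}$, and $\loss[S]{Y\to Z}=\mutinf{Y;S|Z}$, the claim reduces to the purely information-theoretic identity
\[
\mutinf{X;S|Z}=\mutinf{X;S|Y}+\mutinf{Y;S|Z},
\]
which I would establish from the chain rule for mutual information together with the two functional relationships $Y=g(X)$ and $Z=h(Y)$ that define the cascade.

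First I would expand the joint conditional mutual information $\mutinf{X,Y;S|Z}$ by the chain rule,
\[
\mutinf{X,Y;S|Z}=\mutinf{Y;S|Z}+\mutinf{X;S|Y,Z}.
\]
Two simplifications then follow from the deterministic structure of the cascade. Since $Z=h(Y)$ is a function of $Y$, conditioning on the pair $(Y,Z)$ is the same as conditioning on $Y$ alone, so $\mutinf{X;S|Y,Z}=\mutinf{X;S|Y}$. Since $Y=g(X)$ is a function of $X$, adjoining $Y$ to $X$ contributes nothing, so $\mutinf{X,Y;S|Z}=\mutinf{X;S|Z}$. Substituting both equalities into the chain-rule expansion yields exactly the desired identity, and hence the additivity of the relevant information loss.

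An equivalent organization of the same argument proceeds via the Markov chain $S-X-Y-Z$, which holds precisely because $Y$ and $Z$ are deterministic functions of $X$; the two functional simplifications above are then standard consequences of conditioning along this chain.

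I do not expect a genuine obstacle. The only point requiring care is that Definition~\ref{def:relevantLoss} defines each loss through a supremum over quantizing partitions of $S$ rather than directly as a conditional mutual information; however, the definition already equates the two, so reasoning at the level of conditional mutual informations is legitimate and sidesteps any interchange of supremum and summation. To make the cancellations in the rearrangement unambiguous one should assume $\mutinf{S;X}<\infty$, under which Property~1 of Proposition~\ref{prop:elementary}—together with $\mutinf{S;Y}\leq\mutinf{S;X}$ for bounding the middle term $\mutinf{Y;S|Z}$—guarantees that all three quantities are finite.
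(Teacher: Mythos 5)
Your proof is correct and is essentially identical to the paper's own argument: both expand the joint conditional mutual information $\mutinf{S;X,Y|Z}$ via the chain rule and then use the deterministic relations $Y=g(X)$ and $Z=h(Y)$ to identify $\mutinf{S;X,Y|Z}=\mutinf{S;X|Z}$ and $\mutinf{S;X|Y,Z}=\mutinf{S;X|Y}$. The only difference is your added finiteness caveat, which is not actually needed since the argument involves only additions of non-negative chain-rule terms rather than cancellations.
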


\begin{proof}
\ifthenelse{\arxiv=1}{See Appendix.}{See~\cite{Geiger_Relevant_arXiv,Plumbley_TN}.}
\end{proof}

\section{Signal Enhancement, Relevant Information Loss, and IB with Side Information}\label{sec:signal_enhancement}
\begin{figure}[t]
 \centering  
\begin{pspicture}[showgrid=false](2,1.5)(8,3.5)
  \psset{style=RoundCorners}
 	\pssignal(2,2){x}{$X$}\pssignal(2,3.5){s}{$S$}
	\psfblock[framesize=2 1.25](5,2){c}{\parbox[c]{1.5\psunit}{\centering $\mathsf{Signal}$\\ $\mathsf{Enhancer}$}}
	\pssignal(8,2){y}{$Y$}
	\ncline[style=Arrow]{x}{c}\ncline[style=Arrow]{c}{y}	
	\psline[style=Dash,style=Arrow](2,3.25)(2,2.25)
\end{pspicture}
\caption{The problem of signal enhancement: Given a (relevant) information signal $S$ and its observation $X$, how shall we design a system such that its output, $Y$, represents $S$ as good as possible?}
\label{fig:sigenhance}
\end{figure}
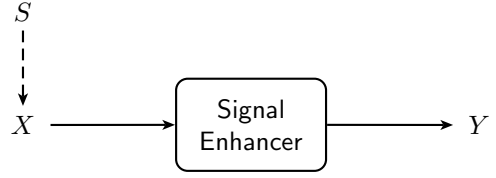

We now turn our attention to the signal enhancement problem. Often, the information one wants to retrieve is not directly available, but only through some corrupted observation: Information-carrying signals are superimposed by noise, distorted through nonlinear systems, and affected by time-varying, dynamic effects. It is essentially the goal of the signal processing engineer to mitigate all these adverse effects; to improve the quality of the observation such that as much information as possible can be retrieved from it with little effort.

Looking at this task from the perspective of information theory, we know that, as the data processing inequality dictates, signal enhancement does \emph{not} mean that one increases the amount of information in the observation. At best, one can build a system which \emph{preserves} as much information as possible (see Fig.~\ref{fig:sigenhance}). Conversely, noise, distortion, and other \emph{irrelevant} components of the observation should be removed such that information retrieval can be done easily. 

This is where our notion of relevant information loss comes in: If we let $S$ be the information carrying signal and $X$ its corrupted observation, our goal is to find a function $g$ such that the relevant information loss is minimized while simultaneously maximizing the irrelevant information loss. One may cast the resulting optimization problem as follows:
\begin{IEEEeqnarray}{RCL}
 \max_g &\quad& \loss[X|S]{X\to Y}\notag\\
 \text{s.t.} &&\loss[S]{X\to Y}\leq C\label{eq:signalEnhancement}
\end{IEEEeqnarray}
where $C$ is some constant\footnote{In some cases it may be beneficial to minimize the relevant information loss subject to a minimum reduction of irrelevant information, or even use a variational formulation.}. In other words, signal enhancement is the \emph{maximization of irrelevant information loss}.

Although in signal processing these information-theoretic considerations are only slowly gaining momentum, they are quite widespread in machine learning and neural networks, e.g.,~\cite{Deco_ITNN,Principe_ITLEarning}. In particular, a variational formulation of maximizing relevant information is the basis of the information bottleneck method (IB)~\cite{Tishby_InformationBottleneck}:
\begin{equation}
\min_{p(y|x)}\ \mutinf{Y;X}-\beta\mutinf{S;Y}\label{eq:IB}
\end{equation}
where the minimization is performed over all relations between the (discrete) RVs $Y$ and $X$ and where $\beta$ is a design parameter\ifthenelse{\arxiv=2}{\footnote{By the fact that $S-X-Y$ is a Markov chain and by the data processing inequality, $\mutinf{X;Y}\geq\mutinf{S;Y}$. As a consequence, equation~\eqref{eq:IB} is lower bounded by $(1-\beta)\mutinf{X;Y}$, which is a non-negative quantity for $\beta<1$. For this parameter choice the minimum is thus achieved for any constant $Y$ ($\mutinf{X;Y}=0$)~\cite{IBSI?}.}}{} trading compression for preservation of relevant information, $\mutinf{S;Y}$. While in principle the relation $p(y|x)$ can be stochastic, in many cases the algorithm is used for (hard) clustering using a deterministic function.\ifthenelse{\arxiv=1}{ We now try to express~\eqref{eq:IB} in terms of relevant and irrelevant information loss: Using $\loss[S]{X\to Y}=\mutinf{S;X}-\mutinf{S;Y}$ we obtain
\begin{IEEEeqnarray}{RCL}
 \mutinf{S;Y} &=& \mutinf{S;X}-\loss[S]{X\to Y}
\end{IEEEeqnarray}
where the first term is obviously independent of $p(y|x)$. Restricting ourselves to the clustering problem, i.e., to deterministic functions $Y=g(X)$, and letting $g^\circ$ be the optimal solution, we get
\begin{IEEEeqnarray}{RCL}
 g^\circ &=& \arg \min_g\ \mutinf{Y;X}-\beta\mutinf{S;Y}\\
&=& \arg \min_g\ -\ent{X}+\mutinf{Y;X}-\beta\mutinf{S;X}\notag\\&&{}+\beta\loss[S]{X\to Y}\\
&=& \arg \min_g\ -\loss{X\to Y}+\beta\loss[S]{X\to Y}\\
&=& \arg \min_g\ -\loss[S]{X\to Y}-\loss[X|S]{X\to Y}\notag\\&&{}+\beta\loss[S]{X\to Y}\\
&=& \arg \min_g\ (\beta-1)\loss[S]{X\to Y}-\loss[X|S]{X\to Y}\notag\\
\end{IEEEeqnarray}
and thus, the optimization problem can be cast as
\begin{equation}
 \min_g\ (\beta-1)\loss[S]{X\to Y}-\loss[X|S]{X\to Y}.
\end{equation}
}{
With Definition~\ref{def:relevantLoss} we can now write
\begin{multline}
 \mutinf{Y;X}-\beta\mutinf{S;Y} \\= \ent{X}-\underbrace{\ent{X|Y}}_{\loss{X\to Y}}
-\beta \mutinf{S;X}+\beta\loss[S]{X\to Y}.
\end{multline}
Neglecting all terms independent of $g$ and applying Corollary~\ref{cor:splitting} allows us to reformulate~\eqref{eq:IB} in terms of relevant and irrelevant information loss:
\begin{equation}
 \min_g\ (\beta-1)\loss[S]{X\to Y}-\loss[X|S]{X\to Y}
\end{equation}
}

Note that for large $\beta$ stronger emphasis is placed on minimizing relevant information loss, and a trivial solution to this problem is obtained by any bijective $g$. 

In the case of discrete RVs this problem can be circumvented by another variant of the IB method: The agglomerative IB method proposed in~\cite{Slonim_AgglomIB} starts with $Y=X$ and iteratively merges elements of the state space $\dom{Y}$ of $Y$ such that the relevant information loss is minimized in each step. By stopping this algorithm as soon as during some merging step $\loss[S]{X\to Y}>C$, at least a local optimum of our original signal enhancement problem~\eqref{eq:signalEnhancement} can be found.

A step further has been made by the authors of~\cite{Chechik_IBSI}, who extended the IB method to incorporate knowledge about irrelevant signal components. They accompany the relevant information $S$ by an irrelevance variable $\bar{S}$, assuming conditional independence between $S$ and $\bar{S}$ given $X$, and minimize the following functional:
\begin{equation}
 \mutinf{Y;X}-\beta\left[\mutinf{Y;S}-\gamma\mutinf{Y;\bar{S}}\right]
\end{equation}
At the same time, the entropy of the system output $Y$ and the irrelevant information $\mutinf{Y;\bar{S}}$ are minimized, and the relevant information $\mutinf{Y;S}$ is maximized. As in IB, $\beta$ and $\gamma$ are the weights for these three conflicting goals. We can again use our notions of relevant and, employing Corollary~\ref{cor:splitting}, irrelevant information loss to rewrite the optimization problem. If we restrict ourselves to clustering, one obtains
\begin{equation}
 \min_g\ (\beta-1)\loss[S]{X\to Y}-(\beta\gamma+1)\loss[X|S]{X\to Y}
\end{equation}
where again all constant terms have been dropped. Note that here $X|S$ takes the place of $\bar{S}$, automatically fulfilling the requirement of conditional independence in~\cite{Chechik_IBSI}.

Unlike for IB, by employing side information it does make sense to let $\beta\to\infty$: Compression emerges naturally from maximizing the irrelevant information loss (due to $\gamma>0$). This is exactly the approach that has been taken up by the authors of~\cite{Sanchez_ProcessingMeasure, Sanchez_PhD}, who introduced the following information processing measure\ifthenelse{\arxiv=2}{\footnote{A few things have to be considered there: First of all, they argue that the complexity reduction (irrelevant information loss) can be negative for stochastic systems, which has to be re-evaluated for our terminology. Secondly, they argue that not only the conditional entropy can act as an uncertainty measure, but, e.g., also Bayes error. Thirdly, by concentrating on entropy, they do not take into account differential entropy and thus quantize (cumbersomely) continuous RVs to make them accessible in their framework. Mutual information has not been taken into account by the authors. Finally, the authors argue that their cost function leads to PCA, Fisher discrimination, and C4.5. In their work, relevance is always defined w.r.t. some goal a (neural) processing system shall achieve.}}{} for discrete RVs $X$, $Y$, and $S$:
\begin{multline}
 \Delta P (X\to Y|S) = \ent{X|S}-\ent{Y|S}\\-\alpha\left(\ent{S|Y}-\ent{S|X}\right)\label{eq:sanchez}
\end{multline}
The authors argue that the first difference in~\eqref{eq:sanchez} corresponds to complexity reduction (i.e., reduction of irrelevant information), while the second term accounts for loss of relevant information. Indeed, with
\begin{IEEEeqnarray}{RCL}
 \loss[X|S]{X\to Y} &=& \mutinf{X;X|Y,S} = \ent{X|Y,S}\\
&=& \ent{X|S}-\ent{Y|S}
\end{IEEEeqnarray}
we can rewrite $\Delta P (X\to Y|S)$ with our notions of relevant information loss and state a variational utility function (to be maximized) for our signal enhancement problem~\eqref{eq:signalEnhancement}:
\begin{equation}
 \Delta P (X\to Y|S) = \loss[X|S]{X\to Y} -\alpha\loss[S]{X\to Y}
\end{equation}
Here, $\alpha>0$ is a design parameter trading between the loss of relevant and irrelevant information.

Aside from the IB method and its variants widely used in machine learning, many other functions and algorithms inherently solve our problem of signal enhancement: Quantizers used in digital communication systems and regenerative repeaters have the goal to preserve the information-carrying (discrete) RV as much as possible while removing continuous-valued noise. Bandpass filters (although not yet directly tractable using our notions of relevant information loss) remove out-of-band noise while leaving the information signal unaltered. And finally, methods for dimensionality reduction (such as the principle component analysis) remove redundancy and irrelevance while trying to preserve the interesting part of the observation. In the next section we focus on exactly this class of methods.

\section{Optimality of the PCA in terms of Relevant Information Loss}
\renewcommand{\eye}{\underline{\mathbf{I}}}
\label{sec:PCA}
We now analyze the relevant information loss of the principal component analysis (PCA). In this regard, we initially follow the reasoning of Plumbley~\cite{Plumbley_TN} who showed that the PCA in some cases minimizes the relevant information loss. We then generalize Plumbley's results and even show that this loss can vanish. To this end, we define the PCA as
\begin{equation}
 \Yvec_M=\gvec(\Xvec)=\eye_M\Yvec=\eye_M\Wmat^T\Xvec\label{eq:PCA}
\end{equation}
where $\Xvec$ is an $N$-dimensional continuous-valued input RV, $\Yvec_M$ an $M$-dimensional output RV (composed of the first $M$ elements of $\Yvec$), and $\Wmat$ the matrix of eigenvectors of the covariance matrix of $\Xvec$, $\Cx=\expec{\Xvec\Xvec^T}$. Thus, $\det{\Wmat}=1$ and $\Wmat^{-1}=\Wmat^T$. Furthermore, let $\eye_M$ be an $(M\times N)$-matrix with ones in the main diagonal. The PCA is here used for dimensionality reduction and, assuming that one has perfect knowledge of the rotation matrix $\Wmat$, the \emph{relative} information loss equals $\frac{N-M}{N}$, while the absolute information loss is infinite~\cite{Geiger_RILPCA_arXiv}.

It is known that the PCA minimizes the mean squared error for a reconstruction $\Xvec_M=\Wmat\eye_M^T\Yvec_M$ of $\Xvec$. In addition to that, as Linsker pointed out in~\cite{Linsker_Infomax}, given that $\Xvec$ is an observation of a Gaussian RV $\Svec$ corrupted by Gaussian noise, the PCA maximizes the mutual information $\mutinf{\Svec;\Yvec}$. For non-Gaussian $\Svec$ (and Gaussian noise with spherical symmetry), the PCA not only provides an upper bound on the mutual information, but also an upper bound on the relevant information loss~\cite{Plumbley_TN,Deco_ITNN}.

We now want to generalize the results such that non-iid and non-Gaussian noise is taken into account as well. In particular, we show that the relevant information loss can indeed vanish under some circumstances. Let
\begin{equation}
 \Xvec = \Svec+\Nvec
\end{equation}
where $\Svec$ and $\Nvec$ are the relevant information and the noise, respectively, with covariance matrices $\Cz$ and $\Cn$. We further assume that $\Svec$ and $\Nvec$ are independent and, as a consequence, $\Cx=\Cz+\Cn$. Let $\{\lambda_i\}$, $\{\mu_i\}$, and $\{\nu_i\}$ be the sets of eigenvalues of $\Cx$, $\Cn$, and $\Cz$, respectively. Let the eigenvalues be ordered descendingly, i.e.,
\begin{equation}
 \lambda_1\geq \lambda_2\geq\dots\geq\lambda_N.
\end{equation}
We now write
\begin{equation}
 \Yvec_M = \eye_M\Wmat^T\Xvec = \eye_M\Wmat^T\Svec+\eye_M\Wmat^T\Nvec = \tilde{\Svec}_M+\tilde{\Nvec}_M.
\end{equation}
As mentioned before, $\Yvec_M$ is composed of the first $M$ elements of the vector $\Yvec$. Conversely, we let $\Yvec_c$, $\tilde{\Svec}_c$, and $\tilde{\Nvec}_c$ denote the last $N-M$ elements of the corresponding vectors. If, as in our case, the orthogonal matrix $\Wmat$ performs the PCA, the covariance matrix of $\Yvec_M$ is a diagonal matrix with the $M$ largest eigenvalues of $\Cx$. We now present

\begin{lem}\label{lem:boundIL}
 For above signal model, the relevant information loss in the PCA is given by
\begin{equation}
 \loss[\Svec]{\Xvec\to\Yvec_M} = \diffent{\Yvec_c|\Yvec_M}-\diffent{\tilde{\Nvec}_c|\tilde{\Nvec}_M}.
\end{equation}
\end{lem}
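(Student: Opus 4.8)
The plan is to start directly from the conditional-mutual-information form of the relevant loss supplied by Definition~\ref{def:relevantLoss}, namely $\loss[\Svec]{\Xvec\to\Yvec_M}=\mutinf{\Xvec;\Svec|\Yvec_M}$, and to reduce the conditioning variable $\Xvec$ to the residual block $\Yvec_c$. Since $\Wmat$ is orthogonal ($\det{\Wmat}=1$, $\Wmat^{-1}=\Wmat^T$), the map $\Xvec\mapsto\Yvec=\Wmat^T\Xvec$ is a bijection, and mutual information is invariant under such a reparametrization of one argument; hence $\mutinf{\Xvec;\Svec|\Yvec_M}=\mutinf{\Yvec;\Svec|\Yvec_M}$. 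Writing $\Yvec=(\Yvec_M,\Yvec_c)$ and observing that conditioning on $\Yvec_M$ renders the first block redundant, the chain rule for conditional mutual information collapses this to $\mutinf{\Yvec_c;\Svec|\Yvec_M}$. This already exposes the structure of the claim, since $\mutinf{\Yvec_c;\Svec|\Yvec_M}=\diffent{\Yvec_c|\Yvec_M}-\diffent{\Yvec_c|\Yvec_M,\Svec}$ and the first term is precisely the $\diffent{\Yvec_c|\Yvec_M}$ appearing in the statement.

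It then remains to identify the second term with $\diffent{\tilde{\Nvec}_c|\tilde{\Nvec}_M}$, which is the step I expect to carry the real weight. I would argue pointwise in $\Svec$: conditioned on a fixed realization $\Svec=\mathbf{s}$, the signal contributions $\tilde{\Svec}_M=\eye_M\Wmat^T\mathbf{s}$ and $\tilde{\Svec}_c$ become deterministic constants, so in $\Yvec_M=\tilde{\Svec}_M+\tilde{\Nvec}_M$ and $\Yvec_c=\tilde{\Svec}_c+\tilde{\Nvec}_c$ only the rotated noise remains random. Conditioning on $\tilde{\Svec}_M+\tilde{\Nvec}_M$ with $\tilde{\Svec}_M$ a known offset is the same as conditioning on $\tilde{\Nvec}_M$, and adding the constant $\tilde{\Svec}_c$ to $\Yvec_c$ leaves its differential entropy unchanged by translation invariance; hence $\diffent{\Yvec_c|\Yvec_M,\Svec=\mathbf{s}}=\diffent{\tilde{\Nvec}_c|\tilde{\Nvec}_M}$.

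The crucial observation, and the point where the independence hypothesis $\Svec\perp\Nvec$ enters, is that this value does not depend on $\mathbf{s}$: because the joint law of $(\tilde{\Nvec}_M,\tilde{\Nvec}_c)$ is unaffected by conditioning on $\Svec$, the right-hand side is constant in $\mathbf{s}$. Averaging over $\Svec$ therefore returns the same quantity, so $\diffent{\Yvec_c|\Yvec_M,\Svec}=\diffent{\tilde{\Nvec}_c|\tilde{\Nvec}_M}$. Combining this with the expansion above gives
\begin{equation}
 \loss[\Svec]{\Xvec\to\Yvec_M}=\diffent{\Yvec_c|\Yvec_M}-\diffent{\tilde{\Nvec}_c|\tilde{\Nvec}_M},
\end{equation}
as asserted.

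The main obstacle is the careful measure-theoretic handling of the conditioning in the second step: one must justify discarding the deterministic signal terms via translation invariance of differential entropy together with the bijective reparametrization of the conditioning variable, and then verify that independence of $\Svec$ and $\Nvec$ makes the conditional entropy constant across realizations, so that the outer expectation over $\Svec$ is trivial. All remaining manipulations (invariance of mutual information under the orthogonal map, the chain rule, and the entropy expansion of conditional mutual information) are routine.
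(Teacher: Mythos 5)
Your proof is correct, but it organizes the computation differently from the paper, so a comparison is worthwhile. The paper never passes through the conditional mutual information $\mutinf{\Yvec_c;\Svec|\Yvec_M}$: it first notes $\loss[\Svec]{\Xvec\to\Yvec_M}=\loss[\Svec]{\Yvec\to\Yvec_M}$ (your orthogonal-invariance step), then writes the loss in the unconditional difference form $\mutinf{\Yvec;\Svec}-\mutinf{\Yvec_M;\Svec}$, expands it into the four terms $\diffent{\Yvec}-\diffent{\Yvec|\Svec}-\diffent{\Yvec_M}+\diffent{\Yvec_M|\Svec}$, collapses $\diffent{\Yvec}-\diffent{\Yvec_M}$ into $\diffent{\Yvec_c|\Yvec_M}$ by the chain rule, and disposes of the conditional terms via $\diffent{\Wmat^T\Svec+\Wmat^T\Nvec|\Svec}=\diffent{\Wmat^T\Nvec}$ and $\diffent{\eye_M\Wmat^T\Svec+\eye_M\Wmat^T\Nvec|\Svec}=\diffent{\eye_M\Wmat^T\Nvec}$ --- i.e., your translation-invariance-plus-independence argument, but applied \emph{unconditionally} --- before recombining the two noise entropies into $-\diffent{\tilde{\Nvec}_c|\tilde{\Nvec}_M}$ by the chain rule. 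You instead stay inside the conditional form $\mutinf{\Xvec;\Svec|\Yvec_M}=\diffent{\Yvec_c|\Yvec_M}-\diffent{\Yvec_c|\Yvec_M,\Svec}$ and run the translation argument pointwise in $\Svec=\mathbf{s}$. What your route buys: it needs only the two conditional entropies to exist, and it avoids the implicit finiteness assumption behind the paper's difference form (an $\infty-\infty$ expression if $\mutinf{\Svec;\Yvec}$ were infinite) as well as the existence of the individual entropies $\diffent{\Yvec}$ and $\diffent{\Yvec_M}$. What the paper's route buys: it is purely mechanical chain-rule algebra in which the signal term is eliminated before any conditioning on $\Yvec_M$ is introduced, so the pointwise, measure-theoretic handling of $\diffent{\Yvec_c|\Yvec_M,\Svec}$ --- which you yourself flag as the main obstacle in your argument --- never arises.
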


\begin{proof}
\ifthenelse{\arxiv=1}{See Appendix.}{See~\cite{Geiger_Relevant_arXiv}.}

\end{proof}

Before proceeding, we need the following
\begin{definition}\label{def:neg}
 Let $X$ and $Y$ be continuous RVs with arbitrary continuous (joint) distribution, and let $X_G$ and $Y_G$ be Gaussian RVs with same (joint) first and second moments. We define the conditional divergence
\begin{equation}
 J(X|Y) = \kld{X|Y}{X_G|Y_G} = \diffent{X_G|Y_G}-\diffent{X|Y}.
\end{equation}
\end{definition}

Clearly, this quantity inherits all its properties from the Kullback-Leibler divergence, e.g., non-negativity, and can be considered as a measure of Gaussianity. \ifthenelse{\arxiv=1}{We want to point out that despite the similarity, this quantity is not to be confused with \emph{negentropy}. For negentropy, instead of using a jointly Gaussian distribution for $(X_G,Y_G)$ for computing $\diffent{X_G|Y_G}$ one uses a Gaussian distribution for $X_G|Y=y$ to compute $\diffent{X_G|Y=y}$ before taking the expectation w.r.t. the true marginal distribution of $Y$.

}{} While most results about the optimality of the PCA are restricted to the Gaussian case, we now employ conditional divergence to generalize some of these results. In particular, we maintain

\begin{thm}\label{thm:minbound}
 Let $\Xvec=\Svec+\Nvec$, with $\Nvec$ and $\Svec$ independent, and let $\Yvec_M$ be obtained by performing dimensionality-reducing PCA. If $\Nvec$ is iid (i.e., $\Cn$ is a scaled identity matrix) and more Gaussian than $\Yvec$ in the sense
\begin{equation}
 J(\tilde{\Nvec}_c|\tilde{\Nvec}_M)\leq J(\Yvec_c|\Yvec_M)
\end{equation}
then the PCA minimizes the Gaussian upper bound on the relevant information loss $\loss[\Svec]{\Xvec\to\Yvec_M}$.
\end{thm}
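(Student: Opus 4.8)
The plan is to reduce the theorem to a single matrix optimization by first turning the exact expression of Lemma~\ref{lem:boundIL} into a Gaussian part plus two non-Gaussianity corrections. I would begin by noting that the formula of Lemma~\ref{lem:boundIL} is not special to the eigenbasis $\Wmat$: for \emph{any} orthogonal matrix $\mathbf{U}$, writing $\Yvec=\mathbf{U}^T\Xvec$, retaining its first $M$ coordinates $\Yvec_M=\eye_M\mathbf{U}^T\Xvec$ and letting $\Yvec_c$ denote the last $N-M$, the same argument gives $\loss[\Svec]{\Xvec\to\Yvec_M}=\diffent{\Yvec_c|\Yvec_M}-\diffent{\tilde{\Nvec}_c|\tilde{\Nvec}_M}$. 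This uses only that $\mathbf{U}$ is invertible (so conditioning on $\Xvec$ is conditioning on $\Yvec$) and that $\Nvec$ and $\Svec$ are independent. This identity is the common playing field on which PCA competes with every other orthogonal projection.

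Next I would apply Definition~\ref{def:neg} to both conditional entropies, giving $\diffent{\Yvec_c|\Yvec_M}=\diffent{\Yvec_{c,G}|\Yvec_{M,G}}-J(\Yvec_c|\Yvec_M)$ and the analogous identity for the noise, so that
\[
\loss[\Svec]{\Xvec\to\Yvec_M}=\underbrace{\diffent{\Yvec_{c,G}|\Yvec_{M,G}}-\diffent{\tilde{\Nvec}_{c,G}|\tilde{\Nvec}_{M,G}}}_{=:B}-J(\Yvec_c|\Yvec_M)+J(\tilde{\Nvec}_c|\tilde{\Nvec}_M).
\]
The hypothesis $J(\tilde{\Nvec}_c|\tilde{\Nvec}_M)\leq J(\Yvec_c|\Yvec_M)$ renders the last two terms non-positive, so $B$ is exactly the \emph{Gaussian upper bound} on the relevant information loss referred to in the statement. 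What remains is to minimize $B$ over the orthogonal map $\mathbf{U}$.

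The third step evaluates and minimizes $B$. Since $\Nvec$ is iid we have $\Cn=\mu\eye$, hence $\mathbf{U}^T\Cn\mathbf{U}=\mu\eye$ for \emph{every} orthogonal $\mathbf{U}$; thus $\tilde{\Nvec}_M$ and $\tilde{\Nvec}_c$ are uncorrelated with fixed marginals and $\diffent{\tilde{\Nvec}_{c,G}|\tilde{\Nvec}_{M,G}}=\tfrac{N-M}{2}\log(2\pi e\mu)$ is a constant. For the signal term I use the Gaussian conditional-entropy formula $\diffent{\Yvec_{c,G}|\Yvec_{M,G}}=\tfrac12\log\!\big((2\pi e)^{N-M}\det\Sigma_{c|M}\big)$, where $\Sigma_{c|M}$ is the Schur complement of the $\Yvec_M$-block inside $\mathbf{U}^T\Cx\mathbf{U}$, so minimizing $B$ is equivalent to minimizing $\det\Sigma_{c|M}$. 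The block-determinant identity gives $\det(\mathbf{U}^T\Cx\mathbf{U})=\det\Sigma_{MM}\cdot\det\Sigma_{c|M}$, and orthogonality makes the left side equal to $\det\Cx=\prod_i\lambda_i$, a constant; hence minimizing $\det\Sigma_{c|M}$ is the same as maximizing $\det\Sigma_{MM}=\det(\mathbf{V}^T\Cx\mathbf{V})$, with $\mathbf{V}$ the first $M$ columns of $\mathbf{U}$ (an orthonormal $M$-frame). By the classical extremal property of eigenvalues (Poincar\'e/Ky Fan), this determinant is maximal precisely when $\mathbf{V}$ spans the top-$M$ eigenspace of $\Cx$, with value $\prod_{i=1}^{M}\lambda_i$ — which is exactly the PCA choice. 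Substituting back yields the minimal bound $B=\tfrac12\sum_{i=M+1}^{N}\log(\lambda_i/\mu)$.

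I expect the main obstacle to be this last step: establishing rigorously that PCA maximizes $\det(\mathbf{V}^T\Cx\mathbf{V})$ over all orthonormal $M$-frames, and being careful that the comparison is restricted to \emph{orthogonal} maps — the only class for which $\diffent{\tilde{\Nvec}_{c,G}|\tilde{\Nvec}_{M,G}}$ stays constant, so that minimizing $B$ genuinely reduces to maximizing the generalized variance of the retained components. A secondary point to verify is that the generalized version of Lemma~\ref{lem:boundIL} indeed holds for arbitrary orthogonal $\mathbf{U}$, which I would check rests only on invertibility of the transform and the independence of $\Svec$ and $\Nvec$.
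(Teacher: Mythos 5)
Your proposal is correct, and it follows the paper's skeleton for the first half: the paper likewise invokes Lemma~\ref{lem:boundIL} together with Definition~\ref{def:neg} to write $\loss[\Svec]{\Xvec\to\Yvec_M}$ as the Gaussian term minus $J(\Yvec_c|\Yvec_M)+J(\tilde{\Nvec}_c|\tilde{\Nvec}_M)$ and then uses the hypothesis to discard the divergence terms. Where you genuinely diverge is in establishing \emph{minimality} of the Gaussian bound. The paper evaluates the bound only at the PCA solution — using decorrelation to get $\diffent{\Yvec_{c,G}|\Yvec_{M,G}}=\diffent{\Yvec_{c,G}}$ and the iid noise assumption to get $\diffent{\tilde{\Nvec}_{c,G}|\tilde{\Nvec}_{M,G}}=\diffent{\tilde{\Nvec}_{c,G}}$ — arriving at $\frac{1}{2}\ln\bigl(\prod_{i=M+1}^{N}\lambda_i/\mu\bigr)$, and then asserts in a single sentence that this is smallest because the PCA retains the coordinates with the $M$ largest eigenvalues, citing eigenvalue facts without carrying a competing transform through the argument. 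You instead keep an arbitrary orthogonal $\mathbf{U}$ throughout: you note that the noise term $\diffent{\tilde{\Nvec}_{c,G}|\tilde{\Nvec}_{M,G}}$ is constant over all orthogonal $\mathbf{U}$ (since $\mathbf{U}^T(\mu\eye)\mathbf{U}=\mu\eye$), reduce the signal term via the Schur-complement determinant identity $\det(\mathbf{U}^T\Cx\mathbf{U})=\det\Sigma_{MM}\det\Sigma_{c|M}$ and the orthogonal invariance of $\det\Cx$ to the problem of maximizing $\det(\mathbf{V}^T\Cx\mathbf{V})$ over orthonormal $M$-frames, and settle that by Poincar\'e/Ky~Fan. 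This buys a precise statement of what ``minimizes'' quantifies over (orthogonal maps followed by coordinate selection) and an actual optimization proof rather than an assertion; it also makes explicit, and correctly justifies, the fact that Lemma~\ref{lem:boundIL} holds for any invertible linear transform, which the paper's comparison implicitly requires. The paper's route buys brevity, since it never needs the Schur-complement machinery. Both arrive at the same minimal value $\frac{1}{2}\ln\bigl(\prod_{i=M+1}^{N}\lambda_i/\mu\bigr)$.
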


\begin{proof}
\ifthenelse{\arxiv=1}{See Appendix.}{See~\cite{Geiger_Relevant_arXiv}.}
\end{proof}

This theorem essentially generalizes the result by Plumbley~\cite{Plumbley_TN,Deco_ITNN}, who claimed that the PCA minimizes the Gaussian upper bound on the information loss for spherical Gaussian noise (i.e., for Gaussian $\Nvec$ with $\Cn$ being a scaled identity matrix). In addition to that, it helps justifying our use of information loss instead of information transfer; an upper bound on the latter would not be useful in our signal enhancement problem.

As a next step, we restrict ourselves to relevant information which is concentrated on an $L\leq M$-dimensional subspace, but drop the requirement that $\Nvec$ is iid. We still assume, however, that $\Cn$ (and, thus, $\Cx$) is full rank. Note that due to these assumptions $\lambda_i>0$ and $\mu_i>0$ for all $i$, while $\nu_i=0$ for $i>L$. \ifthenelse{\arxiv=1}{We are now ready to state}{Employing Cauchy's interlacing inequalities and Weyl's inequalities for eigenvalues we can now prove}

\begin{thm}[Bounds for the PCA]\label{thm:boundsLossPCA}
 Assume that $\Svec$ has covariance matrix $\Cz$ with at most rank $M$, and assume that $\Nvec$ is independent of $\Svec$ and has (full-rank) covariance matrix $\Cn$. Let further $\Nvec$ be more Gaussian than $\Yvec$ in the sense
\begin{equation}
 J(\tilde{\Nvec}_c|\tilde{\Nvec}_M)\leq J(\Yvec_c|\Yvec_M)
\end{equation}
where $\Yvec_M$ is obtained by employing the PCA for dimensionality reduction. Then, the relevant information loss is bounded from above by
\begin{equation}
\loss[\Svec]{\Xvec\to\Yvec_M}\leq \frac{1}{2}\ln\left(\prod_{i=M+1}^{N} \frac{\mu_{1}}{\mu_{i}}\right)
\end{equation}
where $\{\mu_i\}$ is the set of decreasing eigenvalues of $\Cn$ and where $\ln$ is the natural logarithm.
\end{thm}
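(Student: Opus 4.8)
The plan is to start from the exact identity of Lemma~\ref{lem:boundIL}, replace both conditional differential entropies by their Gaussian counterparts through Definition~\ref{def:neg}, use the ``more Gaussian'' hypothesis to discard the non-Gaussian correction, and then evaluate the resulting Gaussian bound with classical eigenvalue inequalities. Writing $\diffent{A|B}=\diffent{A_G|B_G}-J(A|B)$, Lemma~\ref{lem:boundIL} becomes
\begin{equation}
\loss[\Svec]{\Xvec\to\Yvec_M}=\diffent{\Yvec_{c,G}|\Yvec_{M,G}}-\diffent{\tilde{\Nvec}_{c,G}|\tilde{\Nvec}_{M,G}}-J(\Yvec_c|\Yvec_M)+J(\tilde{\Nvec}_c|\tilde{\Nvec}_M).
\end{equation}
The hypothesis $J(\tilde{\Nvec}_c|\tilde{\Nvec}_M)\leq J(\Yvec_c|\Yvec_M)$ renders the last two terms non-positive, so the loss is at most the Gaussian expression $\diffent{\Yvec_{c,G}|\Yvec_{M,G}}-\diffent{\tilde{\Nvec}_{c,G}|\tilde{\Nvec}_{M,G}}$ --- precisely the Gaussian upper bound already invoked in Theorem~\ref{thm:minbound}.

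Next I would evaluate the two Gaussian entropies explicitly. Because $\Wmat$ performs the PCA of $\Xvec$, the covariance matrix $\Wmat^T\Cx\Wmat$ of $\Yvec$ is diagonal with entries $\lambda_1,\dots,\lambda_N$; hence $\Yvec_M$ and $\Yvec_c$ are uncorrelated, their Gaussian surrogates are therefore independent, and $\diffent{\Yvec_{c,G}|\Yvec_{M,G}}=\frac12\ln\bigl((2\pi e)^{N-M}\prod_{i=M+1}^N\lambda_i\bigr)$. For the noise I set $\tilde{\mathbf{C}}=\Wmat^T\Cn\Wmat$ and use the Schur complement $\tilde{\mathbf{C}}_{c|M}=\tilde{\mathbf{C}}_{cc}-\tilde{\mathbf{C}}_{cM}\tilde{\mathbf{C}}_{MM}^{-1}\tilde{\mathbf{C}}_{Mc}$, so that $\diffent{\tilde{\Nvec}_{c,G}|\tilde{\Nvec}_{M,G}}=\frac12\ln\bigl((2\pi e)^{N-M}\det(\tilde{\mathbf{C}}_{c|M})\bigr)$. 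Subtracting, the Gaussian bound collapses to $\frac12\ln\bigl(\prod_{i=M+1}^N\lambda_i/\det(\tilde{\mathbf{C}}_{c|M})\bigr)$.

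The crux --- and the step I expect to be the main obstacle --- is lower-bounding $\det(\tilde{\mathbf{C}}_{c|M})$ by the tail product of noise eigenvalues, the difficulty being that $\Wmat$ diagonalizes $\Cx$ but \emph{not} $\Cn$, so $\tilde{\mathbf{C}}$ is generally non-diagonal and its conditional determinant cannot be read off directly. I would write $\det(\tilde{\mathbf{C}}_{c|M})=\det(\tilde{\mathbf{C}})/\det(\tilde{\mathbf{C}}_{MM})$, note that $\tilde{\mathbf{C}}$ is orthogonally similar to $\Cn$ so $\det(\tilde{\mathbf{C}})=\prod_{i=1}^N\mu_i$, and apply Cauchy's interlacing inequalities to the principal $M\times M$ block $\tilde{\mathbf{C}}_{MM}$: its ordered eigenvalues are dominated by $\mu_1,\dots,\mu_M$, giving $\det(\tilde{\mathbf{C}}_{MM})\leq\prod_{i=1}^M\mu_i$ and hence $\det(\tilde{\mathbf{C}}_{c|M})\geq\prod_{i=M+1}^N\mu_i$. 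Finally I would apply Weyl's inequality to $\Cx=\Cz+\Cn$ with the largest noise eigenvalue, $\lambda_i\leq\nu_i+\mu_1$; since $\Cz$ has rank at most $M$ we have $\nu_i=0$ for $i>M$, so $\lambda_i\leq\mu_1$ for every $i\geq M+1$. Substituting both eigenvalue estimates turns the Gaussian bound into $\frac12\ln\bigl(\prod_{i=M+1}^N\mu_1/\mu_i\bigr)$, which is the claim.
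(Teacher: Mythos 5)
Your proof is correct and takes essentially the same route as the paper: Lemma~\ref{lem:boundIL} combined with the hypothesis $J(\tilde{\Nvec}_c|\tilde{\Nvec}_M)\leq J(\Yvec_c|\Yvec_M)$ yields the Gaussian upper bound, PCA decorrelation removes the conditioning on $\Yvec_{M,G}$, and then Cauchy's interlacing inequality (for the noise block) and Weyl's inequality with $\nu_i=0$ for $i>M$ give the final estimate. The only cosmetic difference is that you evaluate the conditional noise entropy through the Schur-complement identity $\det(\tilde{\mathbf{C}}_{c|M})=\det(\tilde{\mathbf{C}})/\det(\tilde{\mathbf{C}}_{MM})$, whereas the paper uses the equivalent entropy chain rule $\diffent{\tilde{\Nvec}_{c,G}|\tilde{\Nvec}_{M,G}}=\diffent{\tilde{\Nvec}_G}-\diffent{\tilde{\Nvec}_{M,G}}$ before bounding the eigenvalues of $\Cntm$.
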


\begin{proof}
\ifthenelse{\arxiv=1}{See Appendix.}{See~\cite{Geiger_Relevant_arXiv}.}
\end{proof}

\ifthenelse{\arxiv=1}{Note that this upper bound is non-negative, since -- by assumption on the ordering of the eigenvalues -- any term in the product cannot be smaller than one.


We finally show that there are cases where, despite the fact that we reduce the dimensionality of the data, we are still able to preserve all of the relevant information:
\begin{cor}\label{cor:PCALoss}
Assume that $\Svec$ has covariance matrix $\Cz$ with at most rank $M$, and assume that $\Nvec$ is zero-mean Gaussian noise independent of $\Svec$ with covariance matrix $\sigma_\Nvec^2\eye$. Let $\Yvec_M$ be obtained by employing the PCA for dimensionality reduction. Then, the relevant information loss $\loss[\Svec]{\Xvec\to\Yvec_M}$ vanishes.
\end{cor}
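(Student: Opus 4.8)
The plan is to obtain the result as a direct specialization of Theorem~\ref{thm:boundsLossPCA} to spherically symmetric Gaussian noise, for which the upper bound there collapses to zero. All the structural hypotheses of Theorem~\ref{thm:boundsLossPCA} are already in force under the assumptions of the corollary: $\Svec$ and $\Nvec$ are independent, $\Cz$ has rank at most $M$, and $\Cn=\sigma_\Nvec^2\eye$ is full rank (since $\sigma_\Nvec^2>0$). The only condition that still needs checking is the Gaussianity requirement $J(\tilde{\Nvec}_c|\tilde{\Nvec}_M)\leq J(\Yvec_c|\Yvec_M)$.

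First I would verify this requirement, which I expect to be essentially automatic. Because $\Nvec$ is Gaussian and $\Wmat$ is orthogonal, $\tilde{\Nvec}=\Wmat^T\Nvec$ is again jointly Gaussian; hence its true conditional law coincides with the matching Gaussian conditional law, so that $J(\tilde{\Nvec}_c|\tilde{\Nvec}_M)=0$ by Definition~\ref{def:neg}. On the other hand, $J(\Yvec_c|\Yvec_M)\geq0$ since it is a Kullback--Leibler divergence (cf.\ the remark following Definition~\ref{def:neg}). The hypothesis therefore holds with room to spare, and Theorem~\ref{thm:boundsLossPCA} applies. Evaluating the resulting bound is then immediate: with $\Cn=\sigma_\Nvec^2\eye$ every eigenvalue of $\Cn$ equals $\sigma_\Nvec^2$, so $\mu_1=\mu_i=\sigma_\Nvec^2$ for every $i$ and each factor $\mu_1/\mu_i$ in the product equals one, whence
\begin{equation}
\loss[\Svec]{\Xvec\to\Yvec_M}\leq \frac{1}{2}\ln\left(\prod_{i=M+1}^{N}\frac{\mu_1}{\mu_i}\right)=\frac{1}{2}\ln 1=0.
\end{equation}
Since relevant information loss is non-negative by the data processing inequality, this upper bound forces $\loss[\Svec]{\Xvec\to\Yvec_M}=0$, which is the claim.

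If a more transparent argument is preferred, one can bypass the bound and work directly from Lemma~\ref{lem:boundIL}. The key observation is that, because $\Wmat$ diagonalizes $\Cz$ as well as $\Cx$ (adding a scaled identity does not change eigenvectors), the rotated signal $\tilde{\Svec}=\Wmat^T\Svec$ has covariance $\mathrm{diag}(\nu_1,\dots,\nu_N)$ with $\nu_i=0$ for $i>M$; hence the discarded signal block $\tilde{\Svec}_c$ has zero covariance and is almost surely constant. Then $\Yvec_c=\tilde{\Svec}_c+\tilde{\Nvec}_c$ is a constant plus $\tilde{\Nvec}_c$, which is independent of $\Yvec_M=\tilde{\Svec}_M+\tilde{\Nvec}_M$, because the iid Gaussian components $\tilde{\Nvec}_c$ are independent both of $\tilde{\Nvec}_M$ and of the signal. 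Thus $\diffent{\Yvec_c|\Yvec_M}=\diffent{\tilde{\Nvec}_c}=\diffent{\tilde{\Nvec}_c|\tilde{\Nvec}_M}$, and Lemma~\ref{lem:boundIL} returns a difference of two equal terms, namely zero.

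The computation in either route is routine; the only point deserving care is the reduction of the Gaussianity and independence hypotheses to triviality, i.e.\ establishing that the noise contribution to the retained and discarded coordinates decouples cleanly and that the Gaussian conditional divergence of $\tilde{\Nvec}$ vanishes. Once that is in hand, non-negativity of the relevant information loss upgrades the vanishing upper bound to an exact equality and completes the proof.
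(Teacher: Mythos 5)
Your proposal is correct, and your first route is essentially the paper's own proof: the paper disposes of the corollary in one line by observing that $\Cn=\sigma_\Nvec^2\eye$ forces $\mu_1=\dots=\mu_N=\sigma_\Nvec^2$, so the bound of Theorem~\ref{thm:boundsLossPCA} evaluates to $\frac{1}{2}\ln 1=0$, and non-negativity of relevant information loss does the rest. You go beyond the paper in two useful ways. First, you explicitly verify the hypothesis $J(\tilde{\Nvec}_c|\tilde{\Nvec}_M)\leq J(\Yvec_c|\Yvec_M)$: since $\Nvec$ is Gaussian and $\Wmat$ is orthogonal, $\tilde{\Nvec}=\Wmat^T\Nvec$ is Gaussian, so $J(\tilde{\Nvec}_c|\tilde{\Nvec}_M)=0$, while $J(\Yvec_c|\Yvec_M)\geq 0$ as a Kullback--Leibler divergence; the paper leaves this check implicit. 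Second, your alternative route via Lemma~\ref{lem:boundIL} is a genuinely different and more transparent argument: since $\Cx=\Cz+\sigma_\Nvec^2\eye$, the PCA basis also diagonalizes $\Cz$, the discarded signal block $\tilde{\Svec}_c$ is almost surely constant (because $\Cz$ has rank at most $M$ and the smallest eigenvalues of $\Cx$ correspond to zero eigenvalues of $\Cz$), and $\Yvec_c$ is independent of $\Yvec_M$; hence both conditional entropies in the lemma equal $\diffent{\tilde{\Nvec}_c}$ and their difference is exactly zero. This bypasses Theorem~\ref{thm:boundsLossPCA} and its Cauchy/Weyl eigenvalue machinery entirely, and it establishes the loss is zero directly rather than by squeezing it between zero and a vanishing upper bound. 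The only point deserving care in that route is tie-breaking when eigenvalues of $\Cx$ coincide: the eigenspace of $\Cx$ for the eigenvalue $\sigma_\Nvec^2$ is exactly the null space of $\Cz$, so any admissible choice of eigenvectors there carries zero signal variance and the argument survives.
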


\begin{proof}
 The proof follows from the fact that as $\Cn=\sigma_\Nvec^2\eye$, all eigenvalues $\mu_1=\dots=\mu_N=\sigma_\Nvec^2$.
\end{proof}

As mentioned before, due to dimensionality reduction the absolute information loss $\loss{\Xvec\to\Yvec_M}$ is infinite; a direct consequence is that the irrelevant information loss, $\loss[\Xvec|\Svec]{\Xvec\to\Yvec_M}$, is infinite as well. Given the assumptions of Corollary~\ref{cor:PCALoss} holds, the PCA is a good solution to our signal enhancement problem.}
{
A direct consequence of this theorem is that in case of independent, Gaussian noise with $\Cn=\sigma_\Nvec^2\eye$ and all eigenvalues $\mu_i$ being equal, dimensionality reduction is possible without incurring relevant information loss.
}

\section{Examples}
\label{sec:examples}
We now try to illustrate a few of our results with the help of some examples. In particular, we analyze a communication channel with binary input and a dimensionality-reducing PCA for non-Gaussian data. Unless otherwise noted, $\log$ denotes the binary logarithm.

\subsection{Communication channel with uniform noise}
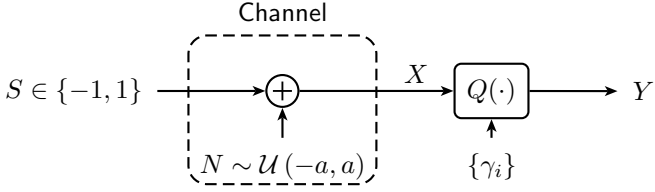
\begin{figure}[t]
 \centering  
\begin{pspicture}[showgrid=false](0,0.5)(8,3.25)
  \psset{style=RoundCorners}
    \pssignal(0.5,2){s}{$S\in\{-1,1\}$}
    \pscircleop(3.25,2){oplus}
    \pssignal(3.25,1){n}{$N\sim\unif{-a}{a}$}
    \psfblock[framesize=1 0.75](6,2){q}{$Q(\cdot)$}
    \pssignal(6,1){quant}{$\{\gamma_i\}$}
    \pssignal(8,2){y}{$Y$}
    \fnode[style=Dash, framesize=2.5 2](3.25,1.75){box}
    \nput{90}{box}{$\mathsf{Channel}$}
    \ncline[style=Arrow]{s}{oplus}
    \ncline[style=Arrow]{n}{oplus}\ncline[style=Arrow]{quant}{q}
  \ncline[style=Arrow]{q}{y} 	
  \ncline[style=Arrow]{oplus}{q}
  \put(4.85,2.1){$X$}
\end{pspicture}
\caption{Digital Communication System: The input signal $S$ is uniformly distributed on $\{-1,1\}$, the channel adds uniform noise to the signal. The channel output, $X$, is quantized using a set of thresholds $\{\gamma_i\}$ to obtain the decision variable $Y$.}
\label{fig:commsys}
\end{figure}
Consider the digital communication system depicted in Fig.~\ref{fig:commsys}. Let the data symbols $S$ be uniformly distributed on $\{-1,1\}$, thus $\ent{S}=1$. Let further $N$ be a noise signal which, for the sake of simplicity, is uniformly distributed on $[-a,a]$, $a>1$. Clearly, $X=S+N$ is a continuous-valued signal with infinite entropy. During quantization an infinite amount of information is lost, $\loss{X\to Y}=\infty$~\cite{Geiger_ILStatic_IZS}. The main point here is that most of the information at the input of the quantizer is information about the noise signal $N$. We will now make this statement precise.

With the differential entropy of $X$ given by
\begin{equation}
 \diffent{X} = \frac{1}{a}\log 4a+\frac{a-1}{a}\log 2a
\end{equation}
and with $\diffent{X|S}=\log 2a$ we get the amount of relevant information available at the input of the quantizer:
\begin{equation}
 \mutinf{X;S} = \frac{1}{a}
\end{equation}
Choosing now a single quantizer threshold $\gamma_1=0$, i.e., $Y=\mathrm{sgn}({X})$ we obtain a binary symmetric channel with cross-over probability $\perr=\frac{a-1}{2a}$. With $\binent{\cdot}$ being the binary entropy function the mutual information thus computes to $\mutinf{Y;S}=1-\binent{\perr}$ (see, e.g.,~\cite{Cover_Information2}) and we obtain a relevant information loss of
\begin{equation}
 \loss[S]{X\to Y} = \binent{\frac{a-1}{2a}} - \frac{a-1}{a}.
\end{equation}

Conversely, if we use two quantizer thresholds $\gamma_1=1-a$ and $\gamma_2=a-1$ we obtain a ternary output RV $Y$. Interpreting any value $\gamma_1\leq X\leq\gamma_2$ as an erasure, we obtain a binary erasure channel with erasure probability $\frac{a-1}{a}$. The corresponding mutual information is computed as $\mutinf{Y;S}=\frac{1}{a}$ (see, e.g.,~\cite{Cover_Information2}), and the relevant information loss vanishes\ifthenelse{\arxiv=2}{\footnote{Preliminary simulations suggest that a similar behavior can also be observed for the Gaussian channel. For all (symmetric) thresholds for a ternary quantizer (leading to a BEC approximation), it seems that there exists an SNR value from which onward ternary quantization outperforms binary quantization.}}{}.

As this example shows, while our system actually destroys an infinite amount of information (to be precise, exactly 100\% of the available information~\cite{Geiger_RILPCA_arXiv}), the relevant information loss can still be zero. Signal enhancement, in the sense of removing irrelevant information, was thus successful.

\ifthenelse{\arxiv=1}{
In this example it is tempting to identify the noise variable $N$ with the irrelevant information $X|S$. However, this not necessarily leads to a correct result, as we show next.

To this end, we substitute the quantizer in Fig.~\ref{fig:commsys} by a magnitude device, i.e., $Y'=|X|$. By the fact that the probability density function of $X$ has even symmetry, it can be shown that $\loss{X\to Y'}=\ent{X|Y'}=1$ (e.g.,~\cite{Geiger_ISIT2011arXiv}). One can further show that, since the marginal distribution of $Y'$ coincides with the conditional distributions $Y|S=-1$ and $Y|S=1$, the mutual information $\mutinf{Y';S}=0$. Thus, $\loss[S]{X\to Y'}=\frac{1}{a}$. As an immediate consequence, $\loss[X|S]{X\to Y'}=\frac{a-1}{a}$.

Let us now determine the information loss relevant w.r.t. $N$. By showing that $\loss[N]{X\to Y'}\neq\loss[X|S]{X\to Y'}$ we argue that noise and irrelevant information are not necessarily identical. Observe that
\begin{IEEEeqnarray}{RCL}
 \loss[N]{X\to Y'} &=& \mutinf{X;N|Y'}\\
&=& \ent{X|Y'} - \ent{X|N,Y'}\\
&=& 1-\ent{S+N|N,Y'}\\
&=& 1-\ent{S|N,|S+N|}.
\end{IEEEeqnarray}
Given we know $N$, $S$ is uncertain only if $|S+N|$ yields the same value for both $S=1$ and $S=-1$. In other words, we have to require that $|N-1|=|N+1|$. Squaring both sides translates this to requiring $N=-N$, which is fulfilled only for $N=0$. Since $\Prob{N=0}=0$ ($N$ is a continuous RV), we automatically obtain
\begin{equation}
 \loss[N]{X\to Y'} = 1 \neq \frac{a-1}{a} =\loss[X|S]{X\to Y'}.
\end{equation}

Indeed, the reason why we cannot identify the noise with the irrelevant information can be related to $N$ and $S$ not being conditionally independent given $X$. We summarize these findings in Table~\ref{tab:ex1}.

\begin{table}
 \caption{Information loss (relevant and irrelevant) for the communication channel with uniform noise.}
\label{tab:ex1}
\centering
\begin{tabular}{c|c|c}
 Loss & $Y=\mathrm{sgn}({X})$ & $Y=|X|$\\
\hline\hline
$\loss{X\to Y}$ & $\infty$ & 1\\\hline
$\loss[S]{X\to Y}$ & $\binent{\frac{a-1}{2a}} - \frac{a-1}{a}$ & $\frac{1}{a}$ \\
$\loss[X|S]{X\to Y}$ & $\infty$ & $\frac{a-1}{a}$\\\hline
$\loss[N]{X\to Y}$ & $\infty$ & 1\\
$\loss[X|N]{X\to Y}$& $\frac{a-1}{2a}$ & 0
\end{tabular}
\end{table}

}{}

\subsection{PCA with non-Gaussian data}
Assume we observe two independent data sources, $S_1$ and $S_2$, with three sensors which are corrupted by independent, unit-variance Gaussian noise $N_i$, $i=1,2,3$. Our sensor signals shall be defined as
\begin{IEEEeqnarray}{RCL}
 X_1&=& S_1+N_1,\\X_2&=&S_1+S_2+N_2,\text{ and}\\X_3&=&S_2+N_3.
\end{IEEEeqnarray}
We assume further that our data sources have variances $\sigma_1^2$ and $\sigma_2^2$ and are non-Gaussian, but that they still can be described by a (joint) probability density function. We obtain the covariance matrix of $\Xvec=[X_1,X_2,X_3]^T$ as
\begin{equation}
 \Cx = \left[\begin{array}{ccc}
              \sigma_1+1  & \sigma_1 &0\\\sigma_1&\sigma_1+\sigma_2+1&\sigma_2\\ 0&\sigma_2&\sigma_2+1 
             \end{array}
\right].
\end{equation}
Performing the eigenvalue decomposition yields three eigenvalues,
\begin{equation}
\{\lambda_1,\lambda_2,\lambda_3\} = \left\{ \sigma_1+\sigma_2+1+C, \sigma_1+\sigma_2+1-C, 1\right\}
\end{equation}
where $C=\sqrt{\sigma_1^2+\sigma_2^2-\sigma_1\sigma_2}$.

We now reduce the dimension of the output vector $\Yvec$ from $N=3$ to $M=2$ by dropping the component corresponding to the smallest eigenvalue. Using Lemma~\ref{lem:boundIL} we now give an upper bound on the relevant loss, $\loss[\Svec]{\Xvec\to\Yvec_M}$. To this end, note that $\Nvec$ is iid Gaussian, and thus $\diffent{\tilde{\Nvec}_c|\tilde{\Nvec}_M}=\diffent{\tilde{\Nvec}_c}$. By assumption, $\Cn=\eye$, and by the orthogonality of the transform,
\begin{equation}
 \diffent{\tilde{\Nvec}_c} = \frac{1}{2}\ln(2\pi\e{}).
\end{equation}
Conversely, we know that
\begin{equation}
 \diffent{\Yvec_c|\Yvec_M}\leq\diffent{\Yvec_{c,G}|\Yvec_{M,G}}=\diffent{\Yvec_{c,G}}=\frac{1}{2}\ln(2\pi\e{}\lambda_3).
\end{equation}
With $\lambda_3=1$ and Lemma~\ref{lem:boundIL} we thus get
\begin{equation}
 \loss[\Svec]{\Xvec\to\Yvec_M}\leq 0.
\end{equation}

The relevant information loss vanishes. Indeed, the eigenvector corresponding to the smallest eigenvalue is given as $\mathbf{p}_3=\frac{1}{\sqrt{3}}[1,-1,1]^T$; thus, for $Y_3$ one would obtain
\begin{equation}
 Y_3=\frac{X_1+X_3-X_2}{\sqrt{3}}=\frac{N_1+N_3-N_2}{\sqrt{3}}.
\end{equation}
Since this component does not contain any signal portion, dropping it does not lead to a loss of relevant information.

Note that in case the noise sources $N_i$ do not have the same variances, the application of PCA may lead to a loss of information, even though the relevant information is still concentrated on a subspace of lower dimensionality. \ifthenelse{\arxiv=1}{We make this precise the next example}{We provide a concrete example in~\cite{Geiger_Relevant_arXiv}}.

\ifthenelse{\arxiv=1}{
\subsection{PCA with large noise variances}
We now again assume that we use three sensors to observe two independent, non-Gaussian data sources which are corrupted by independent Gaussian noise:
\begin{IEEEeqnarray}{RCL}
 X_1&=& S_1+N_1,\\X_2&=&S_2+N_2,\text{ and}\\X_3&=&N_3
\end{IEEEeqnarray}
This time, however, we assume that the data sources have unit variance, and that the variance of noise source $N_i$ is $i$, thus $\{\mu_1,\mu_2,\mu_3\} = \{3,2,1\}$. With this, the covariance matrix of $\Xvec$ is given as
\begin{equation}
 \Cx= \left[\begin{array}{ccc}
              1+1  & 0 &0\\ 0&1+2&0\\ 0&0&3
             \end{array}
\right] = \left[\begin{array}{ccc}
              2  & 0 &0\\0&3&0\\ 0&0&3 
             \end{array}
\right]
\end{equation}
and has eigenvalues $\{\lambda_1,\lambda_2,\lambda_3\} = \{3,3,2\}$.

Since $\Cx$ is already diagonal, the PCA only leads to an ordering w.r.t. the eigenvalues; dropping the component of $\Yvec$ corresponding to the smallest eigenvalue, i.e., dimensionality reduction from $N=3$ to $M=2$, yields $Y_1=N_3$ and $Y_2=S_2+N_2$. Since we do not have access to $S_1$ anymore, information is lost -- in fact, PCA suggested to drop the most informative component of $\Xvec$ (the one with the highest SNR).

With Lemma~\ref{lem:boundIL} we can now compute the Gaussian upper bound on the relevant information loss: Following the proof of Theorem~\ref{thm:minbound},
\begin{equation}
 \loss[\Svec]{\Xvec\to\Yvec_M} \leq \diffent{Y_{3,G}}-\diffent{N_1} = \frac{1}{2}\ln 2.
\end{equation}
The bound obtained from Theorem~\ref{thm:boundsLossPCA} is loose here, evaluating to $ \loss[\Svec]{\Xvec\to\Yvec_M} <\frac{1}{2}\ln 3$.

Since in this case the noise is not iid, the condition of Theorem~\ref{thm:minbound} is not fulfilled, thus the Gaussian upper bound is not minimized by the PCA. Indeed, by preserving $X_1$ and $X_2$ (and dropping $X_3$) the relevant information loss vanishes.
}{}

\section{Conclusion}
In this work we presented the notion of relevant information loss and analyzed many of its elementary properties. We argued that relevant information loss is a central quantity in the problem of signal enhancement and thus, of system theory in general. A comparison with the literature about machine learning and neural networks revealed that many of the algorithms introduced there can be reformulated as solutions for the signal enhancement problem using the notion of relevant information loss. As an example, we discussed principle component analysis used for dimensionality reduction and derived conditions under which the relevant information loss vanishes.

Future work will concentrate on different blocks of the signal processing chain, such as quantizers, sampling devices, and filters as well as on investigating a possible connection between relevant information loss and state space aggregation.

\ifthenelse{\arxiv=1}{
\appendix
\subsection{Proof of Proposition~\ref{prop:dpi}}
 The proof follows along the same lines as the proof of the data processing inequality for mutual information (see~\cite{Cover_Information2}). We first expand $\mutinf{X;W,V|Y}$ as follows:
\begin{IEEEeqnarray}{RCL}
 \mutinf{X;W,V|Y} &=&\mutinf{X;V|Y}+\mutinf{X;W|V,Y}\\
 &=& \mutinf{X;W|Y}+\mutinf{X;V|W,Y}\\
&=& \mutinf{X;W|Y}
\end{IEEEeqnarray}
where the last line follows from the fact that $V$ and $X$ are conditionally independent given $W$.

Using now Definition~\ref{def:relevantLoss} we obtain
\begin{IEEEeqnarray}{RCL}
 \loss[W]{X\to Y}&=& \mutinf{X;W|Y}\\
&=&\mutinf{X;V|Y}+\mutinf{X;W|V,Y}\\
&=&\loss[V]{X\to Y}+\mutinf{X;W|V,Y}\\
&\geq&\loss[V]{X\to Y}
\end{IEEEeqnarray}
which completes the proof.\endproof

\subsection{Proof of Corollary~\ref{cor:dpi}}
If $f(S)-S-X-Y$ is a Markov chain, the proof follows immediately from Proposition~\ref{prop:dpi}. In case $S-f(S)-X-Y$, one gets with the proof of Proposition~\ref{prop:dpi}
\begin{IEEEeqnarray}{RCL}
 \loss[f(S)]{X\to Y}&=&\loss[S]{X\to Y}+\mutinf{X;f(S)|S,Y}\\&=&\loss[S]{X\to Y}.
\end{IEEEeqnarray}
Thus, in this case Proposition~\ref{prop:dpi} is shown to hold with equality.

Furthermore, using Definition~\ref{def:relevantLoss}, the Corollary follows immediately from~\cite[Thm.~3.7.1]{Pinsker_InfoEngl}.\endproof

\subsection{Proof of Proposition~\ref{prop:cascade}}
 For the proof, note that with Definition~\ref{def:relevantLoss} we get
\begin{IEEEeqnarray}{RCL}
 \loss[S]{X\to Z} &\stackrel{(a)}{=}& \mutinf{S;X,Y|Z}\\
&\stackrel{(b)}{=}& \mutinf{S;X|Y,Z}+\mutinf{S;Y|Z}\\&\stackrel{(a)}{=}& \mutinf{S;X|Y}+\mutinf{S;Y|Z}
\end{IEEEeqnarray}
where $(a)$ is due to the fact that $Y=g(X)$ and $Z=h(Y)$, respectively, and $(b)$ is the chain rule of information.\endproof

\subsection{Proof of Lemma~\ref{lem:boundIL}}
We start by noting that $\loss[\Svec]{\Xvec\to\Yvec_M} = \loss[\Svec]{\Yvec\to\Yvec_M}$ since $\Yvec$ and $\Xvec$ are related by an invertible transform. Thus,
 \begin{IEEEeqnarray}{RCL}
  \loss[\Svec]{\Yvec\to\Yvec_M} &=& \mutinf{\Yvec;\Svec}-\mutinf{\Yvec_M;\Svec}\\
&=& \diffent{\Yvec}-\diffent{\Yvec|\Svec}-\diffent{\Yvec_M}+\diffent{\Yvec_M|\Svec}\notag\\
&=& \diffent{\Yvec_c|\Yvec_M}-\diffent{\Wmat^T\Svec+\Wmat^T\Nvec|\Svec}\notag\\ &&{}+\diffent{\eye_M\Wmat^T\Svec+\eye_M\Wmat^T\Nvec|\Svec}\\
&\stackrel{(a)}{=}& \diffent{\Yvec_c|\Yvec_M}-\diffent{\Wmat^T\Nvec}+\diffent{\eye_M\Wmat^T\Nvec}\notag\\
&=&\diffent{\Yvec_c|\Yvec_M}-\diffent{\tilde{\Nvec}_c|\tilde{\Nvec}_M}
 \end{IEEEeqnarray}
where $(a)$ is due to independence of $\Svec$ and $\Nvec$. This completes the proof.\endproof

\subsection{Proof of Theorem~\ref{thm:minbound}}
By Lemma~\ref{lem:boundIL} and Definition~\ref{def:neg},
\begin{IEEEeqnarray}{RCL}
 \loss[\Svec]{\Xvec\to\Yvec_M} &=&\diffent{\Yvec_c|\Yvec_M}-\diffent{\tilde{\Nvec}_c|\tilde{\Nvec}_M}\\
&=& \diffent{\Yvec_{c,G}|\Yvec_{M,G}} - J(\Yvec_c|\Yvec_M)\notag\\&&{}-\diffent{\tilde{\Nvec}_{c,G}|\tilde{\Nvec}_{M,G}}+J(\tilde{\Nvec}_c|\tilde{\Nvec}_M)\\
&\leq& \diffent{\Yvec_{c,G}|\Yvec_{M,G}}-\diffent{\tilde{\Nvec}_{c,G}|\tilde{\Nvec}_{M,G}}\\
&\stackrel{(a)}{=}& \diffent{\Yvec_{c,G}}-\diffent{\tilde{\Nvec}_{c,G}|\tilde{\Nvec}_{M,G}}\label{eq:line1}\\
&\stackrel{(b)}{=}& \diffent{\Yvec_{c,G}}-\diffent{\tilde{\Nvec}_{c,G}}\\
&=& \frac{1}{2}\ln \left(\prod_{i=M+1}^{N}\frac{\lambda_i}{\mu}\right).
\end{IEEEeqnarray}
Here, $(a)$ is due to the fact that the PCA decorrelates the output data $\Yvec$ and thus leads to independence of $\Yvec_{c,G}$ and $\Yvec_{M,G}$ (in the sense of Definition~\ref{def:neg}). By similar reasons $(b)$ follows from the fact that $\Nvec$ is iid ($\Cn$ is a scaled identity matrix, with all eigenvalues being equal $\mu_i=\mu$). Since the PCA preserves the $M$ coordinates corresponding to the $M$ largest eigenvalues of $\Cy$, the last line (obtained with~\cite[Thm. 8.4.1]{Cover_Information2} and~\cite[Fact 5.10.14]{Bernstein_Matrix}) represents the smallest Gaussian upper bound and completes the proof.\endproof

\subsection{Proof of Theorem~\ref{thm:boundsLossPCA}}
We recapitulate~\eqref{eq:line1} from the proof of Theorem~\ref{thm:minbound} and get
\begin{IEEEeqnarray}{RCL}
  \loss[\Svec]{\Xvec\to\Yvec_M} &\leq& \diffent{\Yvec_{c,G}}-\diffent{\tilde{\Nvec}_{c,G}|\tilde{\Nvec}_{M,G}}\\
&=& \diffent{\Yvec_{c,G}} - \diffent{\tilde{\Nvec}_G}+\diffent{\tilde{\Nvec}_{M,G}}.
\end{IEEEeqnarray}
With~\cite[Thm. 8.4.1]{Cover_Information2} and~\cite[Fact 5.10.14]{Bernstein_Matrix} we get
\begin{equation}
 \diffent{\tilde{\Nvec}_G}=\frac{1}{2}\ln\left((2\pi\e{})^N\prod_{i=1}^N \mu_i\right)
\end{equation}
and
\begin{equation}
 \diffent{\Yvec_{c,G}} = \frac{1}{2}\ln\left((2\pi\e{})^{N-M}\prod_{i=M+1}^N \lambda_i\right).
\end{equation}
If now $\Cntm$ denotes the $(M\times M)$-covariance matrix of $\tilde{\Nvec}_{M}$ (and, thus, of $\tilde{\Nvec}_{M,G}$) and $\{\tilde{\mu}_i\}$ the set of eigenvalues of $\Cnt$, we obtain
\begin{equation}
 \loss[\Svec]{\Xvec\to\Yvec_M}\leq\frac{1}{2}\ln\left(\frac{\prod_{i=M+1}^N \lambda_i \prod_{i=1}^M \tilde{\mu}_i}{\prod_{i=1}^N \mu_i}\right).
\end{equation}

We now complete the proof by providing upper bounds on the eigenvalues in the numerator. It is easy to verify that $\Cnt$ is the top left principal submatrix of $\Wmat^T\Cn\Wmat$ (which, by the orthogonality of $\Wmat$ has the same eigenvalues as $\Cn$). As a consequence, we can employ Cauchy's interlacing inequality~\cite[Thm. 8.4.5]{Bernstein_Matrix}:
\begin{equation}
 \mu_{i+N-M}\leq\tilde{\mu}_i\leq\mu_i
\end{equation}
The second bound, $\lambda_{i}\leq\mu_1$, is derived from Weyl's inequality~\cite[Thm. 8.4.11]{Bernstein_Matrix}
\begin{IEEEeqnarray}{RCL}
 \lambda_{i} \leq \nu_i +\mu_1
\end{IEEEeqnarray}
and by noticing that $\nu_j=0$ for all $j>M$.

Combining all this, we obtain as an upper bound on the information loss
\begin{IEEEeqnarray}{RCL}
 \loss[\Svec]{\Xvec\to \Yvec_M} &\leq& \frac{1}{2}\ln\left(\frac{\prod_{i=M+1}^N \lambda_i \prod_{i=1}^M \tilde{\mu}_i}{\prod_{i=1}^N \mu_i}\right)\\
&\leq& \frac{1}{2}\ln\left(\frac{\prod_{i=M+1}^N \mu_1 \prod_{i=1}^M \mu_i}{\prod_{i=1}^N \mu_i}\right)\\
&=& \frac{1}{2}\ln\left(\prod_{i=M+1}^{N} \frac{\mu_1}{\mu_{i}}\right).
\end{IEEEeqnarray}
This completes the proof.\endproof
}{}

\bibliographystyle{IEEEtran}
\bibliography{IEEEabrv,/afs/spsc.tugraz.at/project/IT4SP/1_d/Papers/InformationProcessing.bib,%
/afs/spsc.tugraz.at/project/IT4SP/1_d/Papers/ProbabilityPapers.bib,%
/afs/spsc.tugraz.at/user/bgeiger/includes/textbooks.bib,%
/afs/spsc.tugraz.at/user/bgeiger/includes/myOwn.bib,%
/afs/spsc.tugraz.at/user/bgeiger/includes/UWB.bib,%
/afs/spsc.tugraz.at/project/IT4SP/1_d/Papers/InformationWaves.bib,%
/afs/spsc.tugraz.at/project/IT4SP/1_d/Papers/ITBasics.bib,%
/afs/spsc.tugraz.at/project/IT4SP/1_d/Papers/HMMRate.bib,%
/afs/spsc.tugraz.at/project/IT4SP/1_d/Papers/ITAlgos.bib}
\end{document}